\newtheorem{lemma}{Lemma}
\newtheorem{proposition}{Proposition}
\newtheorem{assumption}{Assumption}
\title{Energy Sharing for Multiple Sensor Nodes with Finite Buffers}
\author{Sindhu Padakandla}
\author{Prabuchandran K.J.}
\author{Shalabh Bhatnagar}
\affil{Dept of Computer Science and Automation, Indian Institute of Science\\E-Mail: $\{$sindhupr, prabu.kj, shalabh$\}$@csa.iisc.ernet.in}
\date{}
\pgfplotsset{
tick label style={font=\large},
}
\begin{document}
\newtheorem{remark}{\bf Remark}
\newtheorem{thm}{Theorem}
\newtheorem{mydef}{Definition}

\newcommand{\E}[1]{\,\mathbb{E}{\left[#1\right]}}
\newcommand{\R}{\text{ }\mathbb{R}}
\newcommand{\amin}[1]{\underset{#1}{\arg\min}}
\newcommand{\amax}[1]{\underset{#1}{\arg\max}}

\newcounter{mycounter}
\maketitle


\newcount\colveccount
\newcommand*\colvec[1]{
        \global\colveccount#1
        \begin{pmatrix}
        \colvecnext
}
\def\colvecnext#1{
        #1
        \global\advance\colveccount-1
        \ifnum\colveccount>0
                \\
                \expandafter\colvecnext
        \else
                \end{pmatrix}
        \fi
}

\begin{abstract} 
We consider the problem of finding optimal energy sharing policies that maximize the 
network performance of a system comprising of multiple sensor nodes and a single energy harvesting (EH) source. 
Sensor nodes periodically sense the random field and generate data, which is stored in the corresponding data queues.
The EH source harnesses energy from ambient energy sources and the generated energy is stored in an energy buffer.
Sensor nodes receive energy for data transmission from the EH source.
The EH source has to efficiently share the stored energy among the nodes in order to minimize the long-run average 
delay in data transmission. We formulate the problem of energy sharing between the nodes in the framework of average cost infinite-horizon Markov decision processes (MDPs).
We develop efficient energy sharing algorithms, namely Q-learning algorithm with exploration mechanisms based on the
$\epsilon$-greedy method as well as upper confidence bound (UCB).
We extend these algorithms by incorporating state and action space aggregation to tackle state-action space explosion in the MDP.
We also develop a cross entropy based method that incorporates policy parameterization in order to find near optimal energy sharing policies.
Through simulations, we show that our algorithms yield energy sharing policies that outperform the heuristic greedy method.
\end{abstract}
{\bf Keywords:}
\\ Energy harvesting sensor nodes, energy sharing, Markov decision process, Q-learning, state aggregation.

\section{Introduction}
\label{sec:intro}
A sensor network is a group of independent sensor nodes, each of which senses the environment.
Sensor networks find applications in weather and soil conditions monitoring, object tracking and structure monitoring.
Each sensor node in the network senses the environment and transmits the sensed data to a fusion node. 
The fusion node obtains data from several sensor nodes and carries out further processing. 

In order to sense the environment and transmit data to the fusion node, nodes require energy
and most often the nodes are equipped with pre-charged batteries for this purpose.
However, as the nodes exhaust their battery power and stop sensing, the network performance degrades.
The lifetime of the network is linked to the lifetimes of the individual nodes. 
Hence, the network becomes inoperable when a large number of nodes stop sensing. 
Thus, in a network with battery operated sensor nodes, the primary intention is to enhance the lifetime of the network,
which may often lead to a compromise in the network performance.
Many techniques have been proposed, which
focus on improving lifetime of networks of sensor nodes. 
One of the more recent techniques which deals with this problem is the usage of 
energy harvesting to provide a perpetual source of energy for the nodes.

An energy harvesting (EH) sensor node replenishes the energy it consumes by harvesting energy 
from the environment (e.g., solar, wind power etc.) or other sources (e.g., body movements, finger strokes etc.) 
and converting into electrical energy.
This way an EH node can be constantly powered through energy replenishment.
So when compared to networks consisting of battery operated nodes, the long-term network 
performance metrics become appropriate. Thus, the goal pertaining to an EH sensor network is to reduce the average delay in data transmission. 
Even though an EH sensor node potentially has infinite amount of energy, yet the energy harvested is infrequently available as
it is usually location and time dependent. Moreover the amount of energy replenished might be lower than the required amount.
Therefore it is important to match the energy consumption with the amount of energy harvested in order to prevent energy starvation.
This underlines the need for intelligently managing harvested energy to achieve the goal of good
network performance.

A drawback associated with an EH sensor (node) is that it requires additional circuitry to harvest energy, which increases the cost
of the node. A network which contains several such nodes is not economically viable. 
The cost of the network can be minimized if there exists a central EH source 
which harvests energy and shares the available energy among multiple sensor nodes in its vicinity.
Such an architecture is incorporated in \emph{motes}. A \emph{mote} (Fig. \ref{mote-pic}) is a single unit
on which sensors with different functionalities are arranged (see \cite{gay2010ultra}). For instance,
there could be pressure sensors, temperature sensors etc., in the same
unit to make different sets of measurements simultaneously. 
Alternatively, the sensors could be of the same functionality but deployed together
at different angles in order to have a $360^{\circ}$ view of the entire sensing
region. 

\begin{figure}[ht]
\begin{center}
\begin{tikzpicture}
 \node (mote) at (0,0) {\includegraphics[scale=0.2]{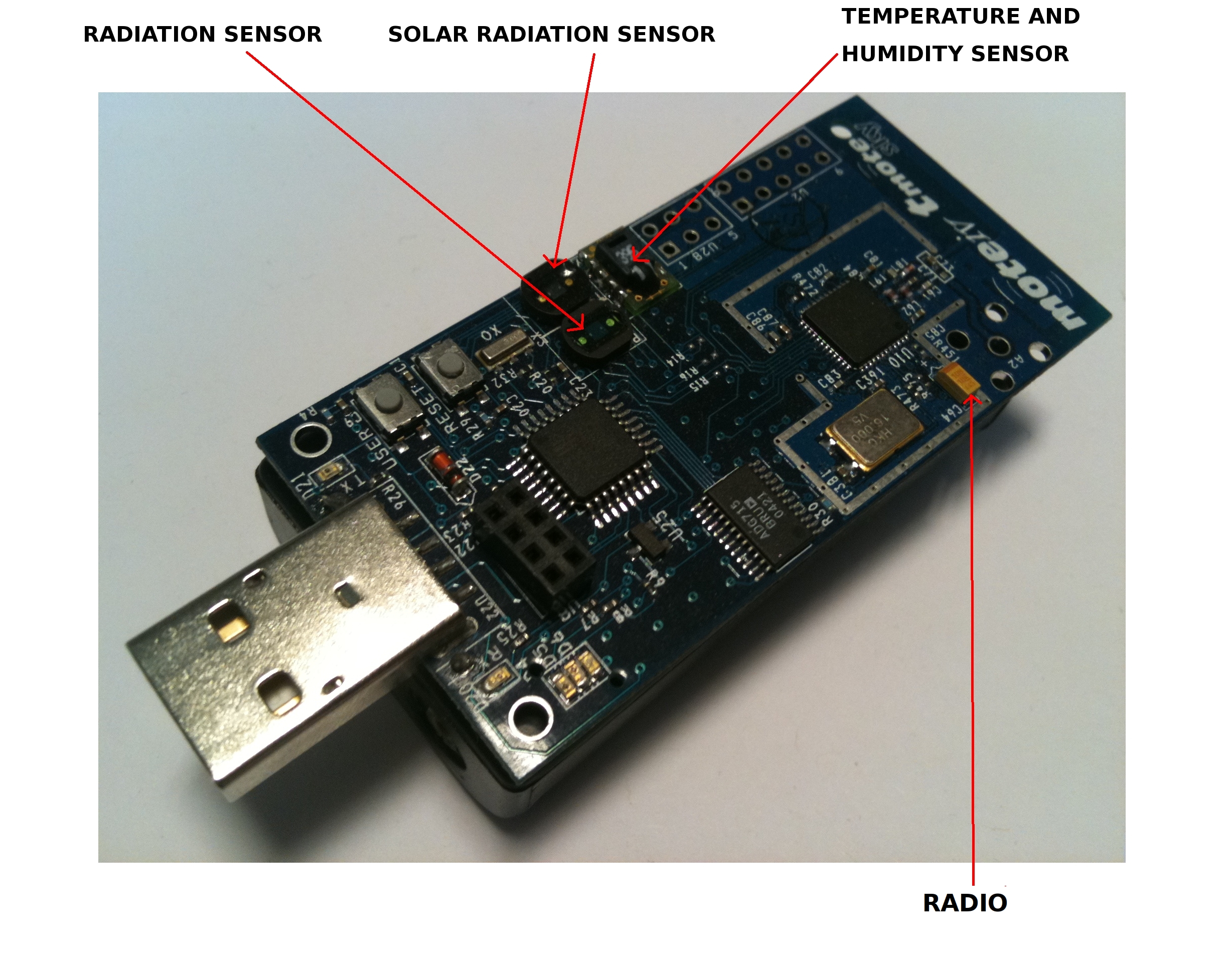}};
\end{tikzpicture}
\caption{Mote with pressure, humidity and temperature sensors (Courtesy: Advanticsys Pvt Ltd. and UC Berkeley)}
\label{mote-pic}
\end{center}
\end{figure}

Each of these sensors (within a unit) have their own data buffers
and a common EH source feeds energy to each of the data queues.
Usually, the EH source is a battery which is recharged by energy harvesting.
The sensors in the mote are perpetually powered, but only if the energy harvested in the source is efficiently shared. 
Thus there is a need for a technique that dynamically allocates
energy to each of the data buffers of individual sensors in order that the
average queue lengths (or transmission delays) across the data buffers are
minimized.

In this paper, we focus on the problem of developing algorithms that achieve
efficient energy allocation in a system comprising of multiple sensor nodes with their own 
data buffers and a common EH source. 
Another scenario (that however we do not consider here)
where our techniques are applicable is the case of downlink transmissions \cite{lalitha2013power},
where a base station (BS) maintains a separate data queue for each
individual sensor node. The BS in question would also typically be powered
by a single EH source, and again the problem would be to dynamically
allocate the available energy to each one of the data queues.
As suggested by a reviewer of the journal version of this paper, the above is equivalent to a communication
setup with with an energy harvesting transmitter and $n$ receivers which are connected to the transmitter over orthogonal
links and equal gain links. The transmitter employs $n$ finite data buffers to store incoming data,
intended for the $n$ receivers and must optimally allocate its energy to transmit data intended for the $n$ receivers.

We present learning algorithms for a controller which has to judiciously distribute the energy amongst the competing nodes.
The controller decides on the amount of energy to be allocated to every node at every decision instant considering the amount of data
waiting to be transmitted in each of the data queues. Thus the state of the system comprises of the amount of data in each of the data queues along with the energy available in the source. 
Given the system state at an instant, the controller has to find out the best possible way to allocate energy to the individual nodes.
The decided allocation has a bearing on the total amount of data transmitted at that instant 
as well as the amount of data that will be transmitted in the future.
Our algorithms help the controller learn the optimal allocation for every state, one which reduces the buildup of data in the data buffers.
In the algorithm we present, the controller systematically tries out several possible allocations feasible in a state,
before learning the optimal allocation. This method is computationally efficient for small number of states. 
However it becomes computationally expensive when there are numerous states. 
We propose approximation algorithms to find the near-optimal allocation of energy in this scenario.
In the following subsection, we survey literature on EH nodes and energy management policies employed in EH sensor networks.

\subsection{Related Work}
Optimizing energy usage in battery-powered sensors is addressed in \cite{6517237,ren2014dynamic}.
The problem of designing appropriate sensor schedules of sensor data transmission is discussed in \cite{6517237}.
A schedule of data transmission indicates when the battery-powered sensor transmits data. 
Transmitting data uses up energy, while not transmitting data results in error in estimation of 
parameters dependent on the sensor data.
The authors in \cite{6517237} consider battery-powered sensor nodes, each of which needs to 
minimize the energy utilized for data transmission. 
The estimation of parameters dependent on the sensor data may however involve error if the sensor does not
transmit data for long periods of time. 
The objective in \cite{6517237} is to find optimal periodic sensor schedules 
which minimize the estimation error at the fusion node and optimize energy usage.

In \cite{ren2014dynamic}, the authors consider battery-powered sensors with two 
transmission power levels. The transmission power levels have different packet drop rates with the higher transmission power level 
having a lower packet drop rate. The sensor can choose one of the power levels for data transmission. 
It is assumed that the fusion node sends an acknowledgment (ACK or NACK) to the sensor node which indicates whether
the data packet has been received or not. The objective in \cite{ren2014dynamic} is to minimize the average expected error in state estimation
under energy constraint. At time $k$, based on the communication feedback the sensor knows whether 
the previous packets have been received by the fusion node or not. The problem of choosing the transmission power level is modeled
as a MDP and the optimal schedule is shown to be stationary. The works \cite{6517237,ren2014dynamic} consider the problem of efficient energy usage 
in battery powered sensors. The aspect of network performance is not considered in these. Our work deals with optimizing energy sharing in EH nodes
where maximizing a network performance objective is the primary goal.

An early work in rechargeable sensors is \cite{kansal2003environmental}.
The authors of \cite{kansal2003environmental} present a framework for the sensor network to adaptively learn the spatio-temporal 
characteristics of energy availability and provide algorithms to use this information for task sharing among nodes. 
In \cite{kansal2007power}, the irregular and spatio-temporal characteristics of harvested energy are considered.
The authors discuss the conditions for ensuring \emph{energy-neutral} operation, i.e., using the energy harvested at an appropriate rate 
such that the system continues to operate forever. Practical methods for a harvesting system to achieve energy-neutral operation are developed.
Compared to \cite{kansal2003environmental,kansal2007power}, we focus on minimizing the delay in data transmission from the nodes
and also ensuring energy neutral operation.

The scenario of a single EH transmitter with limited battery capacity is considered in \cite{tutuncuoglu2011short,5992841}.
In \cite{5992841}, the transmitter communicates in a fading channel, whereas in \cite{tutuncuoglu2011short},
no specific constraints on the channel are considered.
The problem of finding the optimal transmission policy to maximize the short-term throughput of an EH transmitter is considered
in \cite{tutuncuoglu2011short}. Under the assumption of an increasing concave power-rate relationship, 
the short-term throughput maximizing transmission policy is identified.
In \cite{5992841}, the transmitter gets channel state information and the node has to adaptively control the transmission rate.
The objective is to maximize the throughput by a deadline and minimize the transmission completion time of a communication session.
The authors in \cite{5992841} develop an online algorithm which determines the transmit power at
every instant by taking into account the amount of energy available and channel state.

The efficient usage of energy in a single EH node has been dealt with in some recent works 
\cite{niyato2007wireless,sharma2010optimal,prabuchandran2013q,6094139}.
A channel and data queue aware sleep/active/listen mechanism in this direction is proposed in \cite{niyato2007wireless}.
Listen mode turns off the transmitter, while sleep mode is activated if channel quality is bad. The node periodically enters the active mode.
In the listen mode, the queue can build up resulting in packets being dropped. In the sleep mode, incoming packets are blocked. 
A bargaining game approach is used to balance the probabilities of packet drop and packets being blocked.
The Nash equilibrium solution of the game 
controls the sleep/active mode duration and the amount of energy used. 

The model proposed in \cite{sharma2010optimal,prabuchandran2013q} considers a single EH sensor node with
finite energy and data buffers. The authors assume that data sensed is independent across time instants and so is the energy harvested.
The amount of data that can be transmitted using some specified energy is modeled using a \emph{conversion function}.
In \cite{sharma2010optimal}, a linear conversion function is used and optimal energy management policies are provided for the same. 
These policies are throughput optimal and mean delay optimal in a low SNR regime. 
However, in the case of non-linear conversion function, \cite{sharma2010optimal} provides certain heuristic policies.
In \cite{prabuchandran2013q}, a non-linear conversion function is used. The authors therein provide simulation-based learning algorithms 
for the energy management problem. These algorithms are model-free, i.e., do not require an explicit model of the system and the conversion function.
Unlike \cite{tutuncuoglu2011short,5992841,sharma2010optimal,niyato2007wireless,prabuchandran2013q},
our work deals with multiple sensors sharing a common EH power source. The maximization objective is the delay in data transmission
from the nodes. However, channel constraints are not addressed in our work.

Data packet scheduling problems in EH sensor networks are considered in \cite{6094139} and \cite{6253062}. 
It is assumed in \cite{6094139} that a single EH node has separate data and energy queues, 
while the data sensed and energy harvested are random. The same assumption is made for each sensor in a two-sensor communication
system considered in \cite{6253062}.
For simplicity it is assumed that all data bits have arrived in the queue and are ready for transmission, while the energy harvesting times and harvested energy amounts are known before the transmission begins.
In \cite{6094139}(\cite{6253062}) the objective is to minimize the time by which all data packets from the node(s) 
are transmitted (to the fusion node). It is proposed to optimize this by controlling the transmission rate.
The authors develop an algorithm to find the transmission rate at every instant, which optimizes the time 
to transmit the data packets.
A two-user Gaussian interference channel with two EH sensor nodes and receivers is considered in \cite{tutuncuoglu2012sum}.
This paper focuses on short-term sum throughput maximization of data transmitted from the two nodes before a given deadline.
The authors provide generalized water-filling algorithms for the same.
In contrast to the models developed in \cite{6094139,6253062,tutuncuoglu2012sum}, our model assumes multiple sensors sharing a 
common energy source. The data and energy arrivals are uncertain and unknown. 
Moreover the problem we deal with has an infinite horizon, wherein the objective is to reduce the mean delay of
data transmission from the nodes. We develop simulation based learning algorithms for this problem.

Cooperative wireless network settings are considered in \cite{6702854,6657835,tutuncuoglu2013cooperative}.
Three different network settings with energy transfer between nodes are considered in \cite{6657835}.
Energy management policies which maximize the system throughput within a given duration are determined 
in all the three cases. A water-filling algorithm is developed which controls the flow of harvested 
energy over time and among the nodes.
In \cite{tutuncuoglu2013cooperative}, there exists an EH relay node and multiple other EH source nodes.
The source nodes have infinite data buffer capacity. The relay node transfers data between the source and destination nodes.
The source and relay nodes can transfer energy to one another. A sum rate maximization problem in this setting is solved. 
In \cite{6702854}, multiple pairs of sources and destinations communicate via an EH relay node. 
The EH relay node has a limited battery, which is recharged by wireless energy transfer from the source nodes.
The EH relay node has to efficiently distribute the power obtained among the multiple users. 
The authors investigate four different power allocation strategies for outage performance 
(outage is an event in which data is lost due to lack of battery energy or transmission failures caused by channel fades).
We do not consider energy cooperation between nodes in the sensor network. Moreover, we do not assume
wireless energy transfer in our model.

A multi-user additive white Gaussian noise (AWGN) broadcast channel comprising of a single EH transmitter and $M$ receivers is considered in 
\cite{ozel2012optimal}. The EH transmitter harvests energy from the environment and stores in a queue.
The transmitter has $M$ data queues, each of which stores data packets intended for a specific receiver.
The data queues have fixed number of bits to be delivered to the receiver.
The objective in \cite{ozel2012optimal} is to find a transmission policy that minimizes the time by 
which all the bits are transmitted to the receivers.
An optimization problem is formulated and structural properties of the optimal policy are derived.
In our work, we model energy sharing in multiple nodes when there is a single power source. 
We assume uncertain data and energy arrival processes. The objective is to minimize the average delay in data transmission
from the nodes, when there is data arrival at every instant.

\subsection{Our Contributions}
\begin{itemize} 
 \item We consider the problem of efficient energy allocation in a system with multiple
sensor nodes, each with its own data buffer, and a common EH source.
 \item We model the above problem as an infinite-horizon average cost Markov decision process (MDP)
\cite{Bertsekas:2007:DPO:1396348},\cite{puterman1994markov} with an appropriate single-stage cost function. 
Our objective in the MDP setting is to minimize the long-run average delay in data transmission.
\item We develop reinforcement learning algorithms which provide optimal energy sharing policies for the above problem. 
The learning procedure used does not need the system knowledge such as data and energy rates or 
cost structure and learns using the data obtained in an online manner.
 \item In order to deal with the dimensionality of the state space of the MDP, we present approximation
algorithms. These algorithms find near-optimal energy distribution profiles when the state-action space of the MDP becomes unmanageable.
 \item We demonstrate through simulations that the policies obtained from our algorithm are better than the policies
obtained from a heuristic greedy method and a combined nodes Q-learning algorithm (see Section \ref{sec:results}).
\end{itemize}

\subsection{Organization of the Paper}
The rest of the paper is organized as follows. The next section describes the model, related notation and assumptions. 
Section \ref{sec:es-prob-as-an-mdp} formulates the energy sharing problem as an MDP. 
Section \ref{sec:2nodes-algo} presents the RL algorithms used for solving the MDP. 
Section \ref{sec:approximation} highlights the need for approximate policies and gives a detailed explanation of the 
approximation algorithms we develop for the problem.
Section \ref{sec:results} presents the simulation results of our algorithms.
Section \ref{sec:conclusions} provides the concluding remarks and possible future directions.
Finally, an appendix at the end of the paper contains the proof of two results.
\section{Model and Notation}
\label{sec:model-notation}
We consider the problem of sharing the energy available in an energy harvesting source among multiple sensor nodes.
We present a slotted, discrete-time, model (Fig. \ref{fig:model}) for this problem.
A sensor node in the network senses a random field and stores the sensed data in a finite data buffer of size $D_{\!_{MAX}}$.
In order to transmit the sensed data to a fusion (or central) node, the sensor node needs energy, which it obtains
from an energy harvesting source. The energy harvesting source has an energy buffer of finite capacity $E_{\!_{MAX}}$.
The common EH source is an abtract entity in the model. It is generally a rechargeable battery which
is replenished by random energy harvests.
We assume fragmentation of data packets (fluid model) as in \cite{sharma2010optimal} and hence these will be treated as bit strings.
\begin{figure}[h]
\centering
\scalebox{0.6}{\input{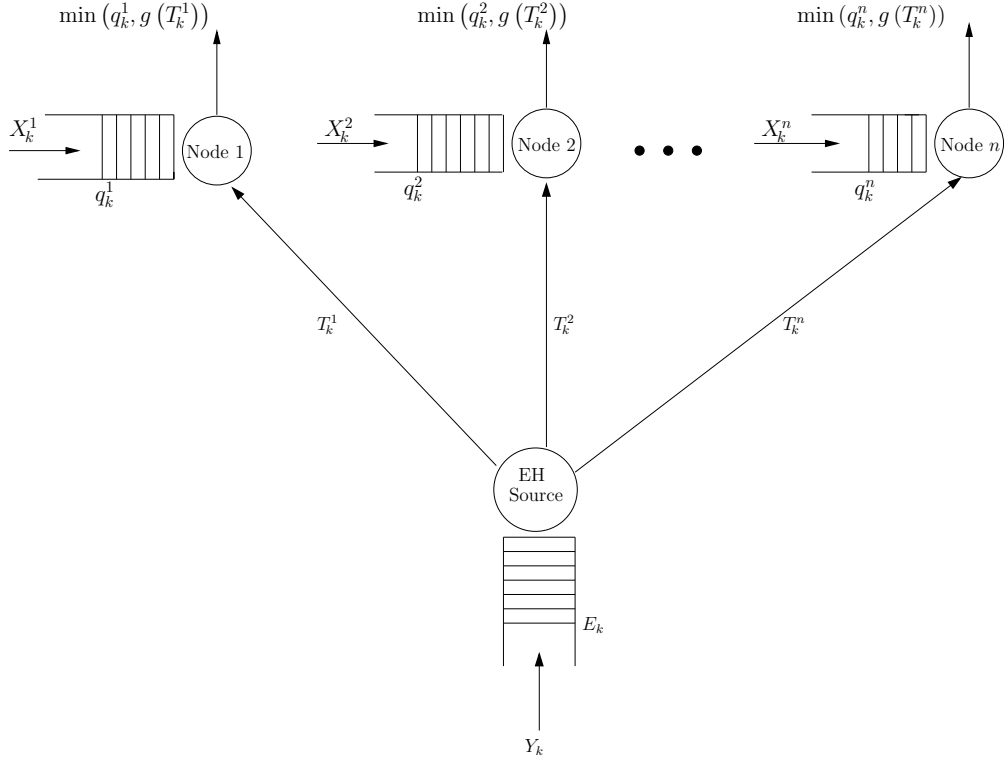}}
   \caption{The System Model}
   \label{fig:model}
\end{figure}

Let $q_{k}^{i}$ denote the data buffer level of node $i$ and $E_k$ be the energy buffer level at the beginning of slot $k$.
Sensor node $i$ generates $X_{\!k}^{i}$ bits of data by sensing the random field. The source harvests $Y_k$ units of energy.
Based on the data queue levels $(q^1_k,\ldots,q^n_k)$ and the energy level $E_k$, the energy sharing controller decides upon the number of energy
bits to be provided to every node. Let $T_{\!k}^{i}$ units of energy be provided to node $i$ in slot $k$. 
Using it, the node transmits $g(T_{\!k}^{i})$ bits of data. We have assumed the function $g$ to be monotonically non-decreasing and concave as with other references (\cite{sharma2010optimal,yang2010transmission,ho2010optimal,ozel2011adaptive,goyal2003power}). 
Note that the Shannon Channel capacity for Gaussian channels gives such a conversion function and in particular,
\begin{displaymath}
 g(T_k) = \frac{1}{2} \log(1 + \beta T_k),
\end{displaymath}
where $\beta$ is a constant and $\beta T_k$ gives the Signal-to-Noise (SNR) ratio. This is a non-decreasing concave function. 
We have assumed this form in the simulation experiments. However, our algorithms work regardless of the 
form of the conversion function and will learn the optimal energy sharing policy 
for any form of conversion function (see Remark \ref{remark:different-g}).

It should be noted that we do not consider wireless energy transfer from the source node to the sensor nodes.
Here we consider the source node to be a rechargeable battery which powers the nodes.
The queue lengths in the data buffers evolve with time as follows:
\begin{equation}
\label{eqn:qk-evolution}
q_{{k+1}}^{i} = ( q_{k}^{i} - g( T_{\!k}^{i} ) )^{+} + X_{\!{k}}^{i}	\qquad 1 \leq i \leq n, k \ge 0,
\end{equation}
where $(q^i_{k} - g( T^{i}_{\!k} ))^{+} = \max(q^i_{k} - g(T^{i}_{\!k}),0)$ and the energy buffer queue length evolves as given below:
\begin{equation}
\label{eqn:ek-evolution}
 E_{k+1} = \biggl(E_k - \sum_{i=1}^{n} \; T_{\!k}^{i} \biggr ) + Y_{\!{k}},	\qquad 1 \leq i \leq n, k \ge 0,
\end{equation}
where ${\displaystyle \sum_{i=1}^{n} } \; T_{\!k}^{i} \; \leq \: E_k $. 
\begin{assumption}
The generated data rates at time $k+1, X_{k+1} \triangleq (X_{k+1}^1, X_{k+1}^2, \ldots, X_{k+1}^n)$ where $n$ denotes the number of 
sensors in a node, evolves as a jointly Markov process, i.e.,
\begin{equation}
\label{eqn:evol1}
X_{k+1}=f^1(X_k,W_k), \qquad k \geq 0
\end{equation}
where $f^1$ is some arbitrary vector valued function with $n$ components
and $\{W_k,k\geq 1\}$ is a noise sequence with probability distribution $P(W_k\mid X_k)$ depending on $X_k$. 
Thus, the generated data $\{X_k, k \geq 0\}$ is both spatially and temporally correlated.
Moreover, the sequence $X_{\!k}^{i}, k \geq 0$ satisfies ${\sup\limits_{k \geq 0} }\; \E {X_{\!k}^{i} }\leq r < \infty$.
Further, the energy arrival process evolves as:
\begin{equation}
\label{eqn:evol2}
Y_{k+1}=f^2(Y_k,V_k), \qquad k \geq 0,
\end{equation}
where $f^2$ is some scalar valued function and $\{V_k,k\geq 1\}$ is the
noise sequence with probability distribution $P(V_k\mid Y_k)$ depending on $Y_k$.
\end{assumption}
\begin{remark}Assumption 1 is general enough to cover most of the stochastic models for the data and energy arrivals.
A special case of Assumption 1 is to consider that for any 
$k \geq 0 \; \text{and} \; 1\leq i\leq n$, $X^i_k$ is independent of 
$X_{\!{k-1}}^{i},X_{\!{k-2}}^{i},\ldots,X_{\!1}^{i},X_{\!0}^{i}$
and the given sequence $\{X^i_k\}_{k \geq 0}$ for a given $i \in \{1,\ldots,n\}$ is identically
distributed.
Similarly, for any $k \geq 0$, $Y_k$ is independent of $Y_{k-1},Y_{k-2},\ldots,Y_1,Y_0$ 
and the sequence $\{Y_k\}$ is identically distributed. In Section \ref{sec:results}
we show results of experiments where the above i.i.d setting as well as a more
general setting as described earlier are shown.
\end{remark}
\section{Energy Sharing Problem as an MDP}
\label{sec:es-prob-as-an-mdp}
A Markov decision process (MDP) is a tuple of states, actions, transition probabilities and single-stage costs. Given that
the MDP is in a certain state, and an action is chosen by the controller, the MDP moves to a `next' state 
according to the prescribed transition probabilities. The objective of the controller is to select a sequence
of actions as a function of the states in order to minimize a given long-term objective (cost).
We formulate the energy sharing problem in the MDP setting using the long-run average cost criterion.
The MDP formulation requires that we identify the states, actions and the cost structure
for the problem, which is described next.

The state $s_k$ is a tuple comprising of the data buffer level of all sensor nodes, the level of the energy buffer in
the source, the data and energy arrivals in the past. Note that for $1 \leq i \leq n, \; q_{k}^{i} \in \{0,1,\ldots,D_{\!_{MAX}}\}$. Similarly $E_k \in \{0,1,\ldots,E_{\!_{MAX}}\}$.
Thus in stage $k$, in the context of Assumption 1, state 
$s_k = (q_{k}^1,q_{k}^2,\ldots,q_{k}^n,E_k,X_{k-1},Y_{k-1})$. 
However, when we assume that for $1 \leq i \leq n$, $\{X^i_k\}$ and 
$\{Y_k\}$ are i.i.d (as in Remark 1), then the state tuple simplifies to $s_k = (q_{k}^1,q_{k}^2,\ldots,q_{k}^n,E_k)$.

The set of all states is the state-space, which is denoted by 
$S$. Similarly $A$ denotes the action-space, which is the set of all actions. The set of feasible actions in a state $s_k$ is denoted by $A(s_k)$.
A deterministic policy $\pi = \{T_{\!k}, k \geq 0\}$ is a sequence of maps such that at time $k$ when state 
$s_k = (q_{k}^1,\ldots,q_{k}^n,E_k,X_{k-1},Y_{k-1})$, 
i.e., when there are $q_{k}^j$ units of data at node $j$, $1 \leq j \leq n$ and $E_k$ bits of energy in the source, 
$X_k$ is the data arrival vector and $Y_K$ is the energy harvested at time $k-1$,
then $T_{k}(s_k) = (T_{\!k}^1(s_k),T_{\!k}^2(s_k),\ldots,T_{\!k}^n(s_k))$ gives the number of energy bits to be given 
to each node at time $k$ (i.e., it gives the energy split). 
Thus the action to be taken in state $s_k$ is given by $T_k(s_k) \in A(s_k)$.
A deterministic policy which does not change with time is referred to as a stationary deterministic policy (SDP). We denote such a policy $\pi$ as 
$\pi = (T,T,\ldots)$, where $T(s_k)$ is the action chosen in state $s_k$.
We set the single-stage cost $\tilde{c}(s_k,T(s_k))$ as a sum of the number of bits in the data buffers. Thus,
\begin{equation}
\label{eqn:cost-structure-not-used}
\tilde{c}(s_k,T(s_k)) = {\sum_{i = 1}^{n}} q^i_{k}.
\end{equation}

\begin{remark}
 In order to formulate the energy sharing problem in the framework of MDP,
we require the state sequence $\{s_k = (q_k^1,q_k^2,\ldots,q_k^n,E_k)\}_{k\geq0}$ under a given policy to be a
Markov chain, i.e.,
$$P(s_{k+1} \mid s_k, s_{k-1}, \ldots, s_0,\pi)=P(s_{k+1}\mid s_k,\pi).$$
We have generalized the assumption on $\{X_k^i, 1 \leq i \leq n\}_{k \geq 0}$ and $\{Y_k\}_{k \geq 0}$ and
consider jointly Markov data arrival and Markovian energy arrival processes.
Remark 1 applies to the i.i.d case.
If we assume the data arrivals  $\{X_k^i\}_{k \geq 0}$ for a fixed $i \in \{1, 2, \ldots, n\}$ and the energy arrivals
$\{Y_k\}_{k \geq 0}$ are i.i.d, then the Markov assumption can be seen to be easily
satisfied.

The Markov property for the state evolution $\{s_k\}_{k \geq 0}$ is necessary as we can only search for
policies based only on the present state of the system. Otherwise, the policies will be based on the entire
history. The search for optimal policies in the space of
history based policies is a computationally infeasible task.

In the general case where $\{X_k\}$ is jointly Markov, note that the state
sequence $\{s_k\}_{k \geq 0}$ under a given policy will not be a Markov chain.
Now consider the augmented state $\bar{s}_k \stackrel{\Delta}{=} \colvec{3}{s_k}{X_{k-1}}{Y_{k-1}}$. Now, under a given policy $\pi=(T,\ldots,T)$, the state evolution can be described as 
\begin{align}
\colvec{6}{q_{k+1}^1}{\vdots}{q_{k+1}^n}{E_{k+1}}{X_k}{Y_k} = \colvec{6}{(q_{k}^{1} - g( T^{1}(s_k) ) )^{+} + X_{\!{k}}^{1}\qquad}{\vdots}{(q_{k}^{n} - g( T^{n}(s_k) ) )^{+} + X_{\!{k}}^{n}\qquad}{ \bigl(E_k - \sum_{i=1}^{n} \; T^{i}(s_k) \bigr) + Y_{\!{k}},\qquad}{f^1(X_{k-1},W_{k-1})}{f^2(Y_{k-1},V_{k-1})}.
\end{align}
This can be written as
$\bar{s}_{k+1} = h(\bar{s}_k,T(\bar{s}_k),W_{k-1},V_{k-1})$ for suitable vector valued function $h$. This is the standard description for the state evolution for an MDP (see Chapter 1 in \cite{bertsekas1995dynamic}). Since the probability distribution of the noise $W_{k-1}$ ($V_{k-1}$) depends only on $X_{k-1}$ ($Y_{k-1}$), the augmented state sequence
$\bar{s}_{k}=\{(s_k,X_{k-1},Y_{k-1})\}_{k\geq0}$ forms a Markov chain. This facilitates search for policies only based on the
present augmented state. 
\end{remark}
\begin{remark}
The sensor node may generate data as packets, but in the model we allow for arbitrary fragmentation of data during transmission.
Hence packet boundaries are no longer relevant and we consider bit strings. This is the fluid model as described in 
\cite{fluidModelReference}. The data is considered to be stored in the data buffers as bit strings and hence the data buffer
levels are discrete. The fluid model assumption (data discretization) has been made in 
\cite{sharma2010optimal,goyal2003power,6094139}.
For energy harvesting we consider energy discretization.
Energy discretization implies that we have assumed that discrete levels of energy are harvested and stored
in the queue.
Energy discretization has been considered in some previous works \cite{aprem2013transmit,sharma2010optimal}.
Owing to these assumptions on data generation and energy harvesting, the state space is discrete and finite.
\end{remark}

The long-run average cost of an SDP $\pi$ is given by
\begin{equation}
\label{eqn:avgcost1}
\tilde{\lambda}^{\pi} = {\lim_{m\rightarrow\infty}} \; \E { \frac{1}{m} {\sum\limits_{k = 0}^{m-1} } \,\tilde{c}(s_k,T(s_k)) }.
\end{equation}
In contrast, a stationary randomized policy (SRP) is a sequence of maps $\varphi = \{\psi,\psi,\ldots \}$ such that for a state $s_k$, 
$\psi(s_k,\cdot)$ is a probability distribution over the set of feasible actions in state $s_k$. Such a policy does not change with time.
The single-stage cost $\tilde{d}(s_k)$ of an SRP $\varphi$ is given by
\begin{equation}
\label{eqn:randomized-cost-fn1}
\tilde{d}(s_k) = \sum_{a \in A(s_k)} \psi(s_k,a)\,\tilde{c}(s_k,a),
\end{equation}
where $a$ gives the energy split in state $s_k$. The long-run average cost of an SRP $\varphi$ is 
\begin{equation}
\label{eqn:avgcost-srp1}
\tilde{\lambda}^{\varphi} = {\lim_{m\rightarrow\infty}} \frac{1}{m} {\sum\limits_{k = 0}^{m-1} } \tilde{d}(s_k).
\end{equation}

We observe that the term $q^i_{k}$ in (\ref{eqn:cost-structure-not-used}) does not include the effect of action explicitly.
Hence we modify the cost function to include the effect of the action taken explicitly into the cost function. 
In order to enable reformulation of the average cost objective in the modified form, we prove the following lemma. Define
\begin{equation}
\label{eqn:actual-avg-cost}
 \lambda^{\pi} = {\lim_{m\rightarrow\infty}} \E {\frac{1}{m} {\sum\limits_{k = 0}^{m-1} } 
		  \,{\sum_{i = 1}^{n}} \left(q^i_{k}-g(T^i(s_{\!k})) \right)^{+} }.
\end{equation}
\begin{lemma}
\label{lemma1}
 Let $q^i_{k}, \: 1 \leq i \leq n$, $T^i(s_{\!k}), \: 1 \leq i \leq n$ and $g$ be as before and let $\E {X^i}, \: 1 \leq i \leq n$ denote the mean of the i.i.d random variables $X^i, \: 1 \leq i \leq n$. Then
\begin{displaymath}
  \lambda^{\pi} = \tilde{\lambda}^{\pi} - \sum\limits_{i=1}^{n} \E {X^i}
\end{displaymath}
for all policies $\pi$.
\end{lemma}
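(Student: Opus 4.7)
The plan is to exploit the data-queue recursion (\ref{eqn:qk-evolution}) to rewrite the clipped term $(q_k^i - g(T^i(s_k)))^+$ as a one-step difference that can be telescoped, and then take the Cesàro limit.

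First I would rearrange (\ref{eqn:qk-evolution}) to obtain the pointwise identity $(q_k^i - g(T^i(s_k)))^+ = q_{k+1}^i - X_k^i$ for every $k \ge 0$ and $1 \le i \le n$. Summing this identity over $k = 0, 1, \ldots, m-1$ and over $i = 1, \ldots, n$, and then subtracting $\sum_{k=0}^{m-1}\sum_{i=1}^n q_k^i$ from both sides, the telescoping on the right collapses the sum $\sum_{k=0}^{m-1}(q_{k+1}^i - q_k^i)$ to $q_m^i - q_0^i$, yielding
\begin{equation*}
\sum_{k=0}^{m-1}\sum_{i=1}^n (q_k^i - g(T^i(s_k)))^+ \; - \; \sum_{k=0}^{m-1}\sum_{i=1}^n q_k^i \;=\; \sum_{i=1}^n (q_m^i - q_0^i) \; - \; \sum_{k=0}^{m-1}\sum_{i=1}^n X_k^i.
\end{equation*}

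Next I would divide through by $m$, take expectations, and pass to the limit as $m \to \infty$. Since each $q_k^i$ takes values in the finite set $\{0,1,\ldots,D_{\!_{MAX}}\}$, the term $\frac{1}{m}\mathbb{E}[q_m^i - q_0^i]$ is bounded by $D_{\!_{MAX}}/m$ and therefore vanishes. For the data-arrival term, under the i.i.d.\ hypothesis of the lemma, $\frac{1}{m}\sum_{k=0}^{m-1}\mathbb{E}[X_k^i] = \mathbb{E}[X^i]$ for every $m$, so the limit is exactly $\sum_{i=1}^n \mathbb{E}[X^i]$. Combining these with the definitions (\ref{eqn:avgcost1}) and (\ref{eqn:actual-avg-cost}) gives $\lambda^{\pi} - \tilde{\lambda}^{\pi} = -\sum_{i=1}^n \mathbb{E}[X^i]$, which is the claim.

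The main obstacle is essentially bookkeeping: one must be careful that the limits defining $\tilde{\lambda}^{\pi}$ and $\lambda^{\pi}$ actually exist so that subtracting them is legitimate; but since everything in sight is uniformly bounded ($q_k^i \le D_{\!_{MAX}}$, $g(T^i(s_k)) \le q_k^i$ through the positive part, and $\mathbb{E}[X_k^i] \le r$ by Assumption 1), the rearrangement and interchange of limit and expectation present no real difficulty, and the finite-state-space structure ensures that the endpoint term $q_m^i/m$ dies in the Cesàro average.
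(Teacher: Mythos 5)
Your proposal is correct and follows essentially the same route as the paper's proof: both rewrite $(q_k^i - g(T^i(s_k)))^+$ as $q_{k+1}^i - X_k^i$ via the queue recursion, telescope the index shift to isolate the boundary term $q_m^i - q_0^i$, kill that term in the Ces\`aro limit, and use the i.i.d.\ assumption to identify the arrival term with $\sum_i \E{X^i}$. Your explicit justification that $\frac{1}{m}\E{q_m^i - q_0^i} \to 0$ via the bound $q_k^i \le D_{\!_{MAX}}$ is a small improvement in rigor over the paper, which asserts this limit without comment.
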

\begin{proof} 
Using state evolution equations (\ref{eqn:qk-evolution})-(\ref{eqn:ek-evolution}),
  \begin{displaymath}
   \lim_{m \rightarrow \infty} E \left[ \frac{1}{m} \sum\limits_{k=0}^{m-1} \sum\limits_{i=1}^{n} \left( q^i_k - g(T^i(s_{\!k}))\right)^{+}\right]
  \end{displaymath}
 \begin{align*}
&= \lim_{m \rightarrow \infty} E \left[ \frac{1}{m} \sum\limits_{k=0}^{m-1} \sum\limits_{i=1}^{n} \left( q^i_{k+1} - X^i_{\!{k}}\right) \right] \\  
&= \lim_{m \rightarrow \infty} E \left[ \sum\limits_{i=1}^{n} \frac{1}{m} \sum\limits_{k=0}^{m-1} \left( q^i_{k+1} - X^i_{\!{k}} \right) \right] \\ %
&= \sum\limits_{i=1}^{n} \lim_{m \rightarrow \infty} E \left[ \frac{1}{m} \sum\limits_{k=0}^{m-1} \left( q^i_{k+1} - X^i_{\!{k}} \right) \right] \\
 &= \sum\limits_{i=1}^{n} \left\{ \lim_{m \rightarrow \infty} 
   E \left[ \frac{1}{m} \sum\limits_{k=0}^{m-1} q^i_{k+1} \right] - 
   \lim_{m \rightarrow \infty} 
   E \left[ \frac{1}{m} \sum\limits_{k=0}^{m-1} X^i_{k} \right] \right\} \\
&= \sum\limits_{i=1}^{n} \left\{ \lim_{m \rightarrow \infty} 
   E \left[ \frac{1}{m} \left( \sum\limits_{k=0}^{m-1} q^i_{k} + q^i_m - q^i_0 \right) \right] - 
   \lim_{m \rightarrow \infty} E \left[ \frac{1}{m} \sum\limits_{k=0}^{m-1} X^i_{k} \right] \right\} \\
&= \sum\limits_{i=1}^{n} \left\{ \lim_{m \rightarrow \infty} 
   E \left[ \frac{1}{m} \left( \sum\limits_{k=0}^{m-1} q^i_{k} + q^i_m - q^i_0 \right) \right] - 
   E \left[ X^i \right] \right\} \\
&= \sum\limits_{i=1}^{n} \left\{  \lim_{m \rightarrow \infty}
   E \left[ \frac{1}{m} \sum\limits_{k=0}^{m-1} q^i_{k} \right] +
   \lim_{m \rightarrow \infty} E \left[ \frac{1}{m} \left( q^i_m - q^i_0 \right) \right] - 
   E \left[ X^i \right] \right\} \\
%
&= \sum\limits_{i=1}^{n} \left\{ \lim_{m \rightarrow \infty} E \left[ \frac{1}{m} \sum\limits_{k=0}^{m-1} q^i_{k} \right] \right\} - \sum\limits_{i=1}^{n} E \left[ X^i \right] \\
&= \tilde{\lambda}^{\pi} - \sum\limits_{i=1}^{n} E \left[ X^i \right].
  \end{align*}
The second last equality above follows from the fact that $\lim\limits_{m \rightarrow \infty} E \left[ \frac{1}{m} \left( q^i_m - q^i_0 \right) \right] = 0$.
The claim follows. 

\end{proof}
The linear relationship between $\tilde{\lambda}^{\pi}$ and $\lambda^{\pi}$ enables us to
define the new single-stage cost function as:
\begin{equation}
\label{eqn:cost-fn}
c(s_k,T_k) = {\sum_{i = 1}^{n}} (q^i_{k} - g(T^i(s_{\!k})))^{+}.
\end{equation}
With this single-stage cost function, the long-run average cost of an SDP $\pi$ is given by
\begin{equation}
\label{eqn:avgcost}
\lambda^{\pi} = {\lim_{m\rightarrow\infty}} \; \E { \frac{1}{m} {\sum\limits_{k = 0}^{m-1} } c(s_k,T(s_k)) }.
\end{equation}
The single-stage cost $d(s_k)$ of an SRP $\varphi$ is given by
\begin{equation}
\label{eqn:randomized-cost-fn}
d(s_k) = \sum_{a \in A(s_k)} \psi(s_k,a)\,c(s_k,a),
\end{equation}
where $a$ gives the energy split. The long-run average cost of an SRP $\varphi$ is 
\begin{equation}
\label{eqn:avgcost-srp}
\lambda^{\varphi} = {\lim_{m\rightarrow\infty}} \; \frac{1}{m} {\sum\limits_{k = 0}^{m-1} } d(s_k).
\end{equation}
It can be inferred from Lemma \ref{lemma1} that a policy which minimizes the average cost in \eqref{eqn:cost-fn} (or \eqref{eqn:avgcost-srp}) 
will also minimize the average cost given by \eqref{eqn:avgcost1} (or \eqref{eqn:avgcost-srp1}).
In this paper we are interested in finding stationary policies (deterministic or randomized) which optimally share the energy among a set of nodes.
Therefore our aim is to find policies which minimize the average cost per step, when the single-stage cost is given by (\ref{eqn:cost-fn}).

Any stationary optimal policy minimizes the average cost of the system over all policies.
Let $\pi^*$ be an optimal policy and $\Pi$ be the set of all policies. The average cost of policy $\pi^*$ is denoted $\lambda^*$.
Then \begin{displaymath}
      \lambda^* = \inf_{\pi \in \Pi} \lambda^{\pi}.
     \end{displaymath}
The policy corresponding to the above average cost minimizes the sum of (data) queue lengths of all nodes.
By Little's law, under stationarity, the average sum of data queue lengths at the sensor nodes is proportional to the
average waiting time or delay of the arrivals (bits). 
Hence an average cost optimal policy minimizes the stationary mean delay as well.

The class of stationary deterministic policies is contained in the class of stationary randomized policies and
in the system we consider, an optimal policy is known to exist in the class of stationary deterministic policies.
We provide an algorithm which finds an optimal SDP.
The algorithm is computationally efficient for small state and action spaces. 
However for large state-action spaces, the algorithm computations are expensive. 
To mitigate this problem, we provide approximation algorithms which find near-optimal stationary policies for the system. These algorithms
are described in the following sections.
%

\section{Energy Sharing Algorithms}
\label{sec:2nodes-algo}
\subsection{Background}
Consider an optimal SDP ${\pi}^*$ for the energy sharing MDP. Then $\lambda^*$ corresponds to the average cost of the policy $\pi^*$.
Suppose $i_r$ is a reference state in the MDP. For any state $i \in S$, let $h^*(i)$ be the relative (or the differential) cost 
defined as the minimum of the difference between the expected cost to reach state $i_r$ from $i$ and the expected cost incurred 
if the cost per stage was $\lambda^{*}$. The quantities $\lambda^{*}$ and $h^{*}(i) , i \in S$ satisfy the Bellman Equation:
\begin{equation}
\label{eqn:bellman-eqn}
 \lambda^* + h^*(i) = {\min_{a \in A(i)}} \left( c(i,a) +  {\sum_{j \in S}} \; p(i,a,j)\,h^*(j)\right),
\end{equation}
where $p(i,a,j)$ is the probability that the system will move from state $i$ to state $j$ under action $a$.
We denote by $Q^*(i,a)$, the optimal differential cost of any feasible state-action tuple $(i,a)$ as follows:
\begin{equation}
\label{eqn:Q*}
 Q^*(i,a) = c(i,a) + \sum_{j \in S} p(i,a,j)h^{*}(j).
\end{equation}
Equation (\ref{eqn:bellman-eqn}) can now be rewritten as
\begin{equation}
\label{eqn:vMinOverq}
\lambda^* + h^*(i) = {\min_{a \in A(i)}} Q^*(i,a), \quad \forall i \in S
\end{equation}
or alternately
\begin{equation}
\label{eqn:vMinOverq-lambda*}
 h^*(i) = {\min_{a \in A(i)}} Q^*(i,a) - \lambda^*, \quad \forall i \in S.
\end{equation}
Plugging \eqref{eqn:vMinOverq-lambda*} into \eqref{eqn:Q*}, one obtains
\begin{equation}
 Q^*(i,a) =  c(i,a) + {\sum_{j \in S}} \; p(i,a,j)\left[ {\min_{b \in A(j)}} Q^{*}(j,b) - \lambda^* \right]
\end{equation}
or
\begin{equation}
 \label{eqn:q-bellmanOpt}
\lambda^* + Q^*(i,a) =  c(i,a) + {\sum_{j \in S}} \; p(i,a,j)  {\min_{b \in A(j)}} Q^{*}(j,b),  \quad \forall i \in S, \forall a \in A(i).
\end{equation}
Equation \eqref{eqn:q-bellmanOpt} is also referred to as the Q-Bellman equation. The important thing to note is that whereas the Bellman equation \eqref{eqn:bellman-eqn}
is not directly amenable to stochastic approximation, the Q-Bellman equation \eqref{eqn:q-bellmanOpt} is; because of the fact that the minimization operation
in \eqref{eqn:q-bellmanOpt} is inside the conditional expectation unlike \eqref{eqn:bellman-eqn} (where it is outside of it).
If the transition probabilities and the cost structure of the system model are known, 
then (\ref{eqn:q-bellmanOpt}) can be solved using dynamic programming techniques \cite{tsitsiklis1999average}.
When the system model is not known (as in the problem we study), the Q-learning algorithm can be used to obtain optimal policies. 
This learning algorithm solves (\ref{eqn:q-bellmanOpt}) in an online manner using simulation to obtain an optimal policy. 
It is described in the following subsection.
\subsection{Relative Value Iteration based Q-Learning}
\label{sec:ql}
Q-learning is a stochastic iterative, simulation-based algorithm that aims to find the $Q^{*}(i,a)$ values for all feasible state-action pairs 
$(i,a)$. It is a model-free learning algorithm and proceeds by assuming that the transition probabilities $p(i,a,j)$ are unknown. 
Initially Q-values for all state-action pairs are set to zero, i.e., $Q_{0}(i,a) = 0, \forall i \in S, \; a \in A(i)$. 
Then $\forall k \geq 0$, the Q-learning update \cite{abounadi2001learning} for a state-action pair visited during simulation is carried out as follows:
\begin{equation}
\label{eqn:q-learning-update}
Q_{k+1}(i,a) = (1-\alpha(k))Q_{k}(i,a) +  \alpha(k) \left(c(i,a) + {\min_{b \in A(j)}} Q_{k}(j,b) - {\min_{u \in A(i_r)}} Q_{k}(i_r,u)\right),
\end{equation}
where $i$ is the current state at decision time $k$ and $i_r$ is the reference state. 
The action in state $i$ is selected using one of the exploration mechanisms described below. State $j$ corresponds to the `next' state that is
obtained from simulation when the action $a$ is selected in state $i$.
Also, $\alpha(k), \; k \ge 0$ is a given step-size sequence such that $\alpha(k) > 0 , \forall k \geq 0$ and satisfies the following conditions:
\[
\text{ } {\sum_{k}} \alpha(k) = \infty \text{ and } {\sum_{k}} \alpha^2(k) < \infty.
\]
Let $t(k) = \sum\limits_{i=0}^{k-1} \alpha(i)$, $k \geq 1$, with $t(0) = 0$. Then, $t(k)$, $k \geq 0$ corresponds
to the ``timescale'' of the algorithm's updates. The first condition above ensures that $t(k) \rightarrow \infty$
as $k \rightarrow \infty$.
This ensures that the algorithm does not converge prematurely. The second condition makes sure that the noise asymptotically vanishes.
These conditions on step sizes guarantee the convergence of Q-learning to the optimal state-action value function, 
see \cite{abounadi2001learning} for a proof of convergence of the algorithm.
The update (\ref{eqn:q-learning-update}) is carried out for the state-action pairs visited during simulation.
The exploration mechanisms we employ are as follows:
\begin{enumerate}
 \item $\epsilon$-greedy: In the energy sharing problem, the number of actions feasible in every state is finite. Hence there exists an action $a_m$
for state $i$ such that $Q_{k}(i,a_m) \leq Q_{k}(i,a^{'}), \; \forall a^{'} \in A(i), \; \forall k \geq 0$. We choose $\epsilon \in (0,1)$. 
In state $i$, action $a_m$ is picked with probability $1-\epsilon$, while any other action is picked with probability $\epsilon$.
 \item UCB Exploration: Let $N_i(k)$ be the number of times state $i$ is visited until time $k$. Similarly let $N_{i,a}(k)$ be the number of times action
$a$ is picked in state $i$ upto time $k$. The Q-value of state-action pair $(i,a)$ at time $k$ is $Q_{k}(i,a)$. 
When the state $i$ is encountered at time $k$, the action for this state is picked according to the following rule:
\begin{equation}
 \label{eqn:uct}
a' = \amax {a \in A(i)} \left( -Q_{k}(i,a) + \beta \, \sqrt{ \frac{\ln N_i(k)}{N_{i,a}(k)} }\right),
\end{equation}
where $\beta$ is a constant. The first term on the right hand side gives preference to an action that has yielded good performance in the past visits to state
$i$, while the second term gives preference to actions that have not been tried out many times so far, relative to $\ln N_i(k)$.
\end{enumerate}
\begin{remark}
The convergence rates for the discounted Q-learning have been studied in
 \cite{tsitsiklis1994asynchronous,kearns1999finite,even2004learning}.
 The finite-time bounds to reach an $\epsilon$-optimal policy by following the Q-learning rule are given in  
\cite{tsitsiklis1994asynchronous,kearns1999finite,even2004learning}. 
In the Q-learning algorithm, to explore the value of different states and actions, 
one needs to visit each state-action pair infinitely often. 
However, in practice, depending on the size of the state-action space, we need to simulate the Q-learning algorithm 
so that each state-action pair is visited a sufficient number of times. In our experiments for the case of two sensor nodes, the 
size of the state space is of the order of $10^5$ and we ran our algorithm for $10^8$ iterations.
\end{remark}

Once we determine $Q^{∗}(i,a)$ for all state-action pairs, we
can obtain the optimal action for a state $i$ by choosing the action that minimizes $Q^{*}(i,a)$. So
\begin{equation}
 a^{*} = \amin {a \in A(i)} \,Q^{*}(i,a).
\end{equation}
It should be noted that the Q-learning algorithm does not need knowledge of the cost structure and transition probabilities, and
it learns an optimal policy by interacting with the system.
\section{Approximation Algorithms}
\label{sec:approximation}
The learning algorithm described in Section \ref{sec:2nodes-algo} is an iterative stochastic algorithm that learns the optimal energy split. 
This method requires that the $Q(s,a)$ values be stored for all $(s,a)$ tuples. The values of $Q(s,a)$ for each $(s,a)$ tuple are updated in 
\eqref{eqn:q-learning-update} over a number of iterations using adequate exploration.
These updations play a key role in finding the optimal control for a given state. 
Nevertheless for large state-action spaces these computations are expensive as every lookup operation and updation require memory access. 
For example, if there are two nodes sharing energy and buffer sizes are $E_{\!_{MAX}} = D_{\!_{MAX}} = 30$, 
then the number of $(s,a)$ tuples would be of the order $10^6$, 
which demands enormous amount of computation time and memory space. 
This condition is exacerbated when the number of nodes that share energy increases. 
For instance, in the case of four nodes sharing energy with $E_{\!_{MAX}} = D_{\!_{MAX}} = 30$, we have $|S \times A| \approx 30^9$. 
Thus, we have a scenario where the state-action space can be extremely large.

To mitigate this problem, we propose two algorithms that are both based on certain threshold features.
Both algorithms tackle the curse of dimensionality, by reducing the computational complexity. 
We describe below our threshold based features, following which we describe our algorithms.

\subsection{Threshold based Features}
\label{sec:t-features}
The fundamental idea of threshold based features is to cluster states in a particular manner, based on the properties of the differential value functions.
The following proposition proves the monotonicity property of the differential value functions for the scenario where there is a single node and an EH source.
This simple scenario is considered for the sake of clarity in the proof.
\begin{proposition}
\label{lemma2}
Let $H^{*}(q,E)$ be the differential value of state $(q,E)$. Let $q < q^L \leq D_{\!{MAX}}$ and $E_{\!{MAX}} \geq E^L > E$, respectively. Then,
\begin{align}
 H^{*}(q,E) &\leq H^{*}(q^{L},E), \\
 H^{*}(q,E) &\geq H^{*}(q,E^{L}).
\end{align}
\end{proposition}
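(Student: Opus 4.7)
The plan is to prove both monotonicity inequalities by a value iteration argument, using the successive approximations to $H^*$. For the single-node case the Bellman equation reads
\begin{equation*}
H^*(q,E) + \lambda^* \;=\; \min_{0 \le a \le E}\Bigl( (q-g(a))^+ + \mathbb{E}_{X,Y}\bigl[ H^*\bigl((q-g(a))^+ + X,\; (E-a)+Y\bigr)\bigr] \Bigr),
\end{equation*}
and I would work with the relative value iterates $V_n$ defined by $V_0 \equiv 0$ and
\begin{equation*}
V_{n+1}(q,E) \;=\; \min_{0 \le a \le E}\Bigl( (q-g(a))^+ + \mathbb{E}_{X,Y}\bigl[ V_n\bigl((q-g(a))^+ + X,\; (E-a)+Y\bigr)\bigr]\Bigr).
\end{equation*}
For a finite-state unichain average-cost MDP, $V_n/n \to \lambda^*$ and $V_n(\cdot) - V_n(i_r) \to H^*(\cdot)$, so it suffices to show that each $V_n$ is non-decreasing in $q$ and non-increasing in $E$, then pass to the limit.

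The base case is trivial. For the inductive step on the $q$-monotonicity, fix $q < q^L$ and $E$. The key observation is that feasibility of an action depends only on $E$, so any $a$ feasible at $(q^L, E)$ is also feasible at $(q, E)$. Pick $a^*$ optimal at $(q^L, E)$ and evaluate it at $(q, E)$: the single-stage cost satisfies $(q - g(a^*))^+ \le (q^L - g(a^*))^+$, the next energy level is identical in the two cases, and the next data level satisfies $(q - g(a^*))^+ + X \le (q^L - g(a^*))^+ + X$ (which is preserved under truncation at $D_{_{MAX}}$). By the induction hypothesis, the expected continuation cost from $(q, E)$ is no larger than that from $(q^L, E)$, and minimizing over actions at $(q, E)$ only decreases it further, giving $V_{n+1}(q,E) \le V_{n+1}(q^L, E)$.

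For the $E$-monotonicity, the asymmetry in feasibility forces me to flip the coupling: I pick $a^*$ optimal at $(q, E)$ (with $a^* \le E < E^L$, so $a^*$ is automatically feasible at $(q, E^L)$) and evaluate it at $(q, E^L)$. The single-stage cost and next data level are identical for both systems, while the next energy level $E^L - a^* + Y$ dominates $E - a^* + Y$ (again preserved under truncation at $E_{_{MAX}}$). Using the non-increasing-in-$E$ induction hypothesis on $V_n$ and then minimizing at $(q, E^L)$ yields $V_{n+1}(q, E^L) \le V_{n+1}(q, E)$. Letting $n \to \infty$ transfers both inequalities to $H^*$.

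The main obstacle I anticipate is getting the coupling directions right, since the two inequalities are \emph{not} symmetric: more data never alters the action set, but less energy does. Hence one must test $a^*$ from the worse state in the $q$-case but $a^*$ from the better state in the $E$-case; this asymmetry is what makes the straightforward "apply the same policy to both systems" coupling insufficient. Handling the truncation at the buffer boundaries $D_{_{MAX}}$ and $E_{_{MAX}}$ is routine once one writes the transition as a $\min$ with the cap, since $\min$ preserves the required inequalities componentwise.
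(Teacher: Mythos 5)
Your proposal is correct and follows essentially the same route as the paper: relative value iteration from the zero function, induction on the iterate index, with the optimal action at $(q^L,E)$ tested at $(q,E)$ for the data-monotonicity and the optimal action at $(q,E)$ tested at $(q,E^L)$ for the energy-monotonicity, then passage to the limit. The only (minor) differences are that you are more explicit about the buffer truncations and the feasibility asymmetry that dictates the coupling direction, points the paper's proof glosses over.
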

\begin{proof}
Let $J(s)$ be the total cost incurred when starting from state $s$. Define the Bellman operator $L: \R^{n} \rightarrow \R^{n}$ as
\begin{displaymath}
 (L\,J)(s) =  \min_{T \in A(s)}(c(s,T) + \E {J(s')}),
\end{displaymath}
where $s'$ corresponds to the next state after $s$ and $T$ corresponds to the action taken in state $s$. As noted in Section \ref{sec:t-features},
we show the proof for a single node and EH source. The proof can be easily generalized to multiple nodes.
Thus the state $s$ corresponds to the tuple $(q,E)$. Hence the above equation can be rewritten as
\begin{displaymath}
 (L\,J)(q,E) =  \min_{T \in A(s)} \left(c(q,E,T) + \E {J(q^{'},E^{'})} \right),  \quad \forall (q,E) \in S.
\end{displaymath}
We consider the application of the operator $L$ on the differential cost function $H(\cdot)$.
We set out to prove this proposition using the relative value iteration scheme (see \cite{bertsekas1996neuro}).
For this, we set a reference state $r \triangleq (q_r,E_r) \in S$.
The cost function in our case is $(q-g(T))^{+}$.
Initially the differential value function has value zero for all states $(q,E) \in S$, i.e., $H(q,E) = 0$, $\forall (q,E) \in S$.
Then for some arbitrary $(q,E) \in S$ we have
\begin{align*}
 L\,H(q,E) &= \min_{T \in A(q,E)} \left( (q-g(T))^{+} + \E {H(q^{'},E^{'})} \right) - L \,H(q_r,E_r) \\
             &= \min_{T \in A(q,E)} ( (q-g(T))^{+} )- L \,H(q_r,E_r)
\end{align*}
since $H(q^{'},E^{'}) = 0$, $\forall (q^{'},E^{'}) \in S$. Let $T_m$ be the value of $T$ achieving the minimum in the first term of RHS. Then
\begin{displaymath}
L\,H(q,E) = (q-g(T_{m}))^{+} - L \,H(q_r,E_r).
\end{displaymath}
Now consider the differential value of state $q^{L}$ where $ q^{L} > q$. Thus, consider
\begin{align*}
 L\,H(q^{L},E) &= \min_{T \in A(q^{L},E)} \left( (q^{L}-g(T))^{+} + \E {H(q^{'},E^{'})} \right) - L \,H(q_r,E_r) \\
             &= \min_{T \in A(q^{L},E)} ( (q^{L}-g(T))^{+} )- L \,H(q_r,E_r) \\
	     &= (q^{L}-g(T_{\!L}))^{+} - L \,H(q_r,E_r),
\end{align*}
where $T_{\!L}$ is the value of $T$ for which the minimum of the expression $(q^{L}-g(T))^{+}$, in the above equations, is achieved.
We have
\begin{align}
\label{eqn:1}
 L\,H(q,E) &= (q-g(T_{m})) - L \,H(q_r,E_r) \nonumber\\
           &\leq (q-g(T_{\!L})) - L \,H(q_r,E_r) \nonumber\\
	   &\leq (q^{L}-g(T_{\!L})) - L \,H(q_r,E_r) \nonumber\\
	   &= L\,H(q^{L},E).
\end{align}
We have $H(q^{L},E) \geq H(q,E)$ since these values are initialized to zero and from \eqref{eqn:1}, $L\,H(q^{L},E) \geq L\,H(q,E)$.
Now consider the differential value function of the state $(q,E^{L})$ where $E^{L} > E$.
\begin{align*}
 L\,H(q,E^{L}) &= \min_{T \in A(q,E^{\!L})} \left( (q-g(T))^{+} + \E {H(q^{'},E^{'})} \right) - L \,H(q_r,E_r) \\
	       &=  \min_{T \in A(q,E^{\!L})} \left( (q-g(T))^{+}\right) - L \,H(q_r,E_r) \\
	       &=  (q-g(T_{\!E}))^{+} - L \,H(q_r,E_r),
\end{align*}
where $T_{\!E}$ is the value of $T$ for which the minimum of the expression $(q - g(T))^{+}$, in the above
equations, is achieved. We have,
\begin{align}
\label{eqn:2}
 L\,H(q,E^{L}) &=  (q-g(T_{\!E}))^{+} - L \,H(q_r,E_r) \nonumber \\
	       &\leq (q-g(T_{\!m}))^{+} - L \,H(q_r,E_r) \nonumber \\
	       &\leq L\,H(q,E).
\end{align}
Since $H(q,E^{L})$, $H(q,E)$ are initialized to zero, we have $H(q,E^{L}) \leq H(q,E)$ and from \eqref{eqn:2}, $L\,H(q,E^{L}) \leq L\,H(q,E)$.
We prove the following statements using mathematical induction:
\begin{align*}
 L^{k}\,H(q,E) &\leq L^{k}\,H(q^{L},E) \qquad \forall k \geq 0,\\
 L^{k}\,H(q,E) &\geq L^{k}\,H(q,E^{L}) \qquad \forall k \geq 0.
\end{align*}
We have seen above that the two statements are true for both $k = 0$ and $k = 1$, respectively.
Lets consider the first statement and assume that the statement holds for some $k$. We then prove that it holds for $(k+1)$. Consider 
\[ L^{k+1}H(q,E) = \underset{T \in A(q,E)}{\min} \left((q - g(T))^{+} + \E {L^{k}H(q^{'},E^{'})} \right) - L^{k}\,H(q_r,E_r). \]
Assume $T_m$ is the value of $T$ at which the minimum of $((q - g(T))^{+} + \E {L^{k}H(q^{'},E^{'})} )$
is attained. Then,
\[ L^{k+1}H(q,E) = ((q-g(T_m))^{+} +  \E {L^{k}H(q - g(T_m) + x, E - T_m + y)} ) - L^{k}\,H(q_r,E_r), \]
where $x,y$ are obtained from independent random distributions. Similarly, we get
\[ L^{k+1}H(q^{L},E) = ((q^{L}-g(T_L))^{+} + \E { L^{k}H(q^{L} - g(T_{\!L}) + x, E - T_L + y)} ) - L^{k}\,H(q_r,E_r),\]
where $T_{\!L}$ is the value of $T$ for which the minimum in the expression $((q^{L}-g(T))^{+} + \E {L^{k}H(q^{'},E^{'})} )$ is achieved.
\begin{align*}
L^{k+1}H(q,E) &= ((q-g(T_m))^{+} + \E {L^{k}H(q - g(T_m) + x, E - T_m + y)} ) - L^{k}\,H(q_r,E_r)\\
	      &\leq ((q - g(T_{\!L}))^{+} + \E {L^{k}H(q - g(T_{\!L}) + x, E - T_{\!L} + y)} ) - L^{k}\,H(q_r,E_r) \\
	      &\leq ((q^{L} - g(T_{\!L}))^{+} +  \E {L^{k}H(q^{L} - g(T_{\!L}) + x, E - T_{\!L} + y)} )- L^{k}\,H(q_r,E_r),
\end{align*}
since the property holds true for $L^{k}H$, i.e., $L^{k}H(q,E) \leq L^{k}H(q^{L},E)$. Thus,
\begin{align*}
L^{k+1}H(q,E) &\leq ((q^{L} - g(T_{\!L}))^{+} + \E{ L^{k}H(q^{L} - g(T_{\!L}) + x, E - T_{\!L} + y)} ) - L^{k}\,H(q_r,E_r)\\
	      &=     L^{k+1}H(q^{L},E).
\end{align*}
Hence,
\begin{equation}\label{induction1}
L^{k}H(q,E) \leq L^{k}H(q^{L},E) \quad\forall k \geq 0.
\end{equation}
Similarly we get,
\begin{align*}
L^{k+1}H(q,E^{L}) &= ((q - g(T_E))^{+} + \E{ L^{k}H(q - g(T_E) + x, E^{L} - T_E + y)} ) - L^{k}\,H(q_r,E_r)\\
		  &\leq ((q - g(T_m))^{+} +  \E{L^{k}H(q - g(T_m) + x, E^{L} - T_m + y)} ) - L^{k}\,H(q_r,E_r)\\
		  &\leq ((q - g(T_m))^{+} + \E{ L^{k}H(q - g(T_m) + x, E - T_m + y)} ) - L^{k}\,H(q_r,E_r)\\
		  &= L^{k+1}H(q,E),
\end{align*}
hence by mathematical induction on $k$ we get,
\begin{equation}\label{induction2}
L^{k}H(q,E^{L}) \leq L^{k}H(q,E) \quad\forall k \geq 0.
\end{equation}
As a consequence of the relative value iteration scheme (\cite{puterman1994markov}), 
when $k \rightarrow \infty$, $L^{k}H \rightarrow H^{*}$ with $H^{*}(q_r,E_r) = \lambda^{*}$. Thus, from 
\eqref{induction1} and \eqref{induction2} as $k \rightarrow \infty$, we obtain
\[H^{*}(q,E) \leq H^{*}(q^{L}, E) \]
\[H^{*}(q,E) \geq H^{*}(q, E^{L}) .\]
The claim now follows.
\end{proof}
\par Proposition \ref{lemma2} can be easily generalized to multiple nodes in the following manner.
Suppose there are $n$ nodes and one EH source. Let $s = (q^1,\ldots,q^j,\ldots,q^n,E)$
and $s' = (q^1,\ldots,q^j_{L},\ldots,q^n,E)$, where $q^j_{L} > q^j$.
The states $s$ and $s'$ differ only in the data buffer queue lengths of node $j$, 
while the data buffer queue lengths of other nodes remain the same and so does the
energy buffer level. Then it can be observed that \( H^{*}(q^1,\ldots,q^j,\ldots,q^n,E) \leq H^{*}(q^1,\ldots,q^j_{L},\ldots,q^n,E) \).
In a similar manner, let state $s'' = (q^1,q^2,\ldots,q^n,E^L) $ and $E^L > E$. Then states $s \text{ and } s''$ differ only in the
energy buffer levels. Hence \( H^{*}(q^1,\ldots,q^n,E) \) \(\geq H^{*}(q^1,\ldots,q^n,E^L) \).
This proposition provides us a method which is useful for clustering states. 
\begin{remark}
The monotonicity property of the differential value function $H^*$ provides a justification to group nearby states 
to form an aggregate state. The value function of the aggregated state will be the average of the value function of the 
states in a partition.
If the difference between values of states in a cluster is not much, the value function of aggregated state will 
be close to the value function of the unaggregated state.
Thus, the policy obtained from the aggregated value function is likely to be close to the policy obtained from unaggregated states. 
Without the monotonicity property, states may be grouped arbitrarily and consequently, state aggregation may not 
yield a good policy. 
\end{remark}
\begin{remark}
 In the case of MDP with large state-action space, one goes for function approximation based methods 
(see Chapter 8 in \cite{sutton1998introduction}). However, if one combines Q-learning with function approximation, 
we do not have convergence guarantees to the optimal policy unlike Q-learning without function approximation 
(Q-learning with tabular representation \cite{sutton1998introduction}). However, when 
Q-learning is combined with state-aggregation (QL-SA) we continue to have convergence guarantees 
(see Section 6.7 in \cite{bertsekas1996neuro}). Q-learning using state aggregation 
can produce good policies only when the value function has a monotonicity structure, which is proved in the previous remark.
\end{remark}

\subsubsection{Clustering}
The data and energy buffers are quantized and using this we formulate the aggregate state-action space.
The quantization of buffer space is described next.
We predefine data buffer and energy buffer partitions (or quantization levels) $d_1,d_2,\ldots,d_s$ and $e_1,e_2,\ldots,e_r$ respectively. 
The partition (or quantization level) $d_i, (i \in \{1,\ldots, s\})$ corresponds to a given range
 $(x^i_{\!_L},x^i_{\!_U})$ and is fixed, 
where $x^i_{\!_L}$ and $x^i_{\!_U}$ represent the prescribed lower and upper data buffer level limits. 
In a similar manner the quantization level $e_j,(j \in \{1,\ldots, r\})$ (or energy buffer partition) 
corresponds to a given interval $(y^i_{\!_L},y^i_{\!_U})$,
where $y^i_{\!_L}$ and $y^i_{\!_U}$ represent the prescribed lower and upper energy buffer level limits. 
As an example, suppose $D_{\!_{MAX}} = E_{\!_{MAX}} = 10$ and each of the buffers are quantized into three levels, 
i.e., $s = r = 3$. An instance of data and energy buffer partition ranges in this scenario can be 
$y^1_{\!_L} = x^1_{\!_L} = 0, y^1_{\!_U} = x^1_{\!_U} = 3, y^2_{\!_L} =  x^2_{\!_L} = 4,
x^2_{\!_U} = y^2_{\!_U} = 7,x^3_{\!_L} = y^3_{\!_L} = 8, y^3_{\!_U} = x^3_{\!_U} = 10$.
Here Partition $1$ corresponds to the number of data (energy) bits (units) in the range $(0,3)$, while
Partition $3$ corresponds to the number of data (energy) bits (units) in the range $(8,10)$.
The following inequalities hold with respect to the partition limits:
\begin{align*}
 0 = x^1_{\!_L} < x^1_{\!_U} &< x^2_{\!_L} < x^2_{\!_U} < \ldots < x^s_{\!_L} < x^s_{\!_U} = D_{\!_{MAX}} \text{   and}\\
&x^i_{\!_U} + 1 =  x^{i+1}_{\!_L},	\quad 1 \leq i \leq s-1 . 
\end{align*}
Similarly,
\begin{align*}
 0 = y^1_{\!_L} < y^1_{\!_U} &< y^2_{\!_L} < y^2_{\!_U} < \ldots < y^s_{\!_L} < y^s_{\!_U} = E_{\!_{MAX}} \text{   and}\\
&y^i_{\!_U} + 1 =  y^{i+1}_{\!_L},	\quad 1 \leq i \leq r-1 .
\end{align*}

\subsubsection{Aggregate States and Actions}
We define an aggregate state as $s' = \{l^{^1},\ldots,l^{^{n+1}}\}$, where for
$1 \leq i \leq n$, $l^{^i}$ is the data buffer level for the $i^{th}$ node and $l^{n+1}$ is the energy buffer level.
So $l^{^i} \in \{1,\ldots, s\},\; 1 \leq i \leq n$ and $l^{^{n+1}} \in \{1,\ldots, r\}$.
An aggregate action corresponding to the state $s'$ is an $n$-tuple $t^{'}$ of the form $t^{'} = (t^1,\ldots,t^n)$, 
where $t^i \in \{1,\ldots, l^{^{n+1}}\} ,\; 1 \leq i \leq n$.
Each component in $t^{'}$ indicates an energy level. By considering the data level in all the nodes, 
the controller decides on an energy level for each node.
Thus the energy level indicates the
energy partition which can be supplied to the node. For instance, if $D_{\!_{MAX}} = E_{\!_{MAX}} = 15$, 
$s = r = 3$ and there are two nodes in the system, then an example aggregate state is $s' = (1,1,3)$. Suppose the controller selects
the aggregate action $t^{'} = (2,1)$, which means that the controller decides to give $u$ number of energy bits to Node $1$, 
and $v$ number of energy bits to Node $2$, with $y^2_{\!_L} \leq u \leq y^2_{\!_U}$ and $y^1_{\!_L} \leq v \leq y^1_{\!_U}$, respectively.

\subsubsection{Cardinality Reduction}
Note that $s \ll D_{\!_{MAX}}$, $r \ll E_{\!_{MAX}}$.
Let the aggregated state and action spaces be denoted by $S^{'} \text{ and }A^{'}$ respectively. The aggregated state-action space has
cardinality $|S^{'} \times A^{'}|$.
Thus, the cardinality of the state-action space is reduced to a great extent by aggregation. 
For instance, in the case of four nodes sharing energy from one EH source and $E_{\!_{MAX}} = D_{\!_{MAX}} = 30$, the cardinality of the state-action
space without state-aggregation is $|S \times A| \approx 30^9$. However, with four partitions each for the data and energy buffers, the 
cardinality of the state-action space with aggregation is $|S^{'} \times A^{'}| \approx 4^9$.

\subsection{Approximate Learning Algorithm}
\label{sec:approx-alg}
We now explain our approximate learning algorithm for the energy sharing problem. It is based on Q-learning and state aggregation. 
Although the straightforward Q-learning algorithm described in Section \ref{sec:2nodes-algo} requires complete state information and is not computationally 
efficient with respect to large state-action spaces, its state-aggregation based counterpart requires significantly less computation and memory space. 
Also our experiments show that we do not compromise much on the policy obtained either (see Fig. \ref{plot:2n-ql-qlsa-gr-cnql}).

\subsubsection{Method}
Let $s' = \{l^{^1}_k,\ldots,l^{^{n+1}}_k\}$ be the aggregate state at decision time $k$. 
The action taken in $s'$ is $t' = (t^1_k,\ldots,t^n_k)$. 
The Q-value $Q(s',t{'})$ indicates how good an aggregate state-action tuple is. 
The algorithm proceeds with the following update rule:
\begin{equation}
\label{eqn:ql-sa}
 Q_{k+1}(s',t') = (1-\alpha(k))Q_{k}(s',t') + \alpha(k)\left( c(s',t') + 
 \underset{b \in A^{'}(j')} \min Q_k(j',b)  - \underset{u \in A^{'}(r')} \min Q_{k}(r',u)\right),
\end{equation}
where $j'$ is the aggregate state obtained by simulating action $t'$ in state $s'$. 
Also, $r'$ is a reference state and $\alpha(k), \; k \ge 0$ is a positive step-size schedule satisfying the conditions mentioned in Section \ref{sec:ql}.
To facilitate exploration, we employ the mechanisms described in Section \ref{sec:ql}.
Convergence of Q-learning with state aggregation is discussed in Section 6.7 of \cite{bertsekas1996neuro}.

\begin{remark}
The aggregate state in every step of the iteration \eqref{eqn:ql-sa} is computed by 
knowing the amount of data present in each sensor node. 
A viable implementation would just need a mapping of the buffer levels to these partitions,
using which the controller can compute the aggregate state for any combination
of buffer levels. Since this method requires storing of Q-value of 
the aggregate state-action pair and $|S^{'} \times A^{'}| \ll |S \times A|$, the number of Q-values stored is much
less compared to the unaggregated Q-learning algorithm. The computational complexity
of the method described above is dependent on the size of the aggregate state-action space and 
the number of iterations required to converge to an optimal policy (w.r.t the aggregate state-action space).
For instance, in the case of four sensor nodes, the size of the state-action space grows to $\approx 30^9$ with the 
data and energy buffer sizes being $30$ each.
The number of iterations that the above method requires to find a near-optimal policy is $10^{9}$ with six partitions of the buffer size
as compared to Q-learning without state aggregation (Section \ref{sec:ql}) which requires at least $10^{11}$ iterations.
\end{remark}
\begin{remark}
It must be observed that using \eqref{eqn:ql-sa}, the controller decides the partition and not the number of energy bits to be distributed, i.e., it finds an optimal aggregate action for every aggregate state. 
It follows from this that, in order to find the aggregate action for an aggregate state, 
the knowledge of the exact buffer levels is not required 
(since this is based on the Q-values of aggregate state-action pairs).
In this manner \eqref{eqn:ql-sa} is beneficial. 
The optimal policy obtained using (\ref{eqn:ql-sa}) would indicate only the energy levels.
An added advantage of the above approximation algorithm is 
that the cost structure discussed in Section \ref{sec:es-prob-as-an-mdp} holds good
here as well.
\end{remark}
\subsubsection{Energy distribution}
\label{subsubsec:en-distr}
Note that once an aggregate action is chosen for a state, the energy division is random adhering to the action levels chosen. 
For instance, lets assume that there are two sensor nodes in the system.
Data and energy buffers have three partitions each and thus \( s = 3, \: r = 3\). 
Here $y^1_{\!_L} = 0$ and $y^3_{\!_U} = E_{\!_{MAX}}$.
Suppose the number of energy bits in the energy buffer is $z$ and those bits belong to partition $3$. 
Let the number of data bits at nodes $1 \text{ and }2$ be $x$ and $y$, respectively. Here $x \text{ and } y$ belong to partition $2$.
Hence the aggregate state is $(2,2,3)$. The controller decides on the aggregate action $(1,2)$.
Thus $x^1_{\!_L}$ bits of energy is provided to Node 1, while Node 2 is given $x^2_{\!_L}$ bits of energy.
The remaining number of bits in the buffer will be $r = z - (x^1_{\!_L} + x^2_{\!_L})$. In order to distribute these bits, the proportions of data  
$p_1 = \frac{x}{x+y}$ and $1 - p_1 = \frac{y}{x+y}$ are computed. Each of the $r$ bits are supplied to Node $1$ with probability $p_1$ and to Node $2$
with probability $1-p_1$. If $u \text{ and }v$ represent the total number of energy bits provided to Nodes 1 and 2 respectively, then 
$u \leq x^1_{\!_U}$, $v \leq x^2_{\!_U}$ and $(u - x^1_{\!_L}) + (v - x^2_{\!_L}) \leq r$.
It must be observed that even though an aggregate action chosen requires knowledge of only the aggregate state,
the random distribution of energy (after a control is selected using \eqref{eqn:ql-sa}),
is achieved by knowing the exact buffer levels.

\begin{remark}
 An advantage of using state-aggregation with Q-learning is that it has convergence guarantees 
(Chapter 6, Section 6.2 \cite{bertsekas1996neuro}). This overcomes the problem of basis selection for function
approximation in the case of large state-action spaces. We have tried different partitoning schemes manually 
and all the schemes resulted in close policy performance. Also, we observed that increasing the number of
partitions improves the policy performance (see Fig. 6 in Section \ref{sec:results}) .
\end{remark}

\subsection{Cross Entropy using State Aggregation and Policy Parameterization}
\label{sec:cross-entropy}
The cross-entropy method is an iterative approach (\cite{rubinstein1999cross}) that we apply to find near-optimal stationary randomized policies for the energy sharing problem. 
The algorithm searches for a policy in the space of all stationary randomized policies in a systematic manner.
We define a class of randomized stationary policies $\{ \pi^{\boldsymbol{\theta}},\boldsymbol{\theta} \in \R^{M} \}$,
parameterized by a vector $\boldsymbol{\theta}$. For each pair $(s,a) \in S^{'} \times A^{'}$,
$\pi^{\boldsymbol{\theta}}(s,a)$ denotes the probability of taking action $a$ when the state $s$ is encountered
under the policy corresponding to $\boldsymbol{\theta}$.
In order to follow the cross entropy approach and obtain the optimal ${\boldsymbol{\theta}}^{*} \in \R^{M}$,
we treat each component $\theta_i, ~i \in \{ 1,2,\ldots, M \}$ of $\boldsymbol{\theta}$ as a normal random variable with mean $\mu_i$ and
variance $\sigma_i$. We will refer to these two quantities (the parameters of the normal distribution) as \emph{meta-parameters}.
We will tune the meta-parameters using the cross-entropy update rule \eqref{eqn:cross-entropy-upd} to find the
best values of $\mu_i$ and $\sigma_i$ which will correspond to a mean of $\theta_i^*$ and a variance of zero. The cross entropy method works as
follows:
Multiple samples of $\boldsymbol{\theta}$ namely ${\boldsymbol{\theta}}^1,{\boldsymbol{\theta}}^2, \ldots,
{\boldsymbol{\theta}}^N$  are generated according to the normal distribution with the
current estimate of the meta-parameters. Each sampled ${\boldsymbol{\theta}}$ will then
correspond to a stationary randomized policy.  We compute the average cost
$\lambda(\boldsymbol{\theta})$ of an SRP determined by a sample ${\boldsymbol{\theta}}$ by running a simulation
trajectory with the policy parameter fixed with the sample ${\boldsymbol{\theta}}$. We
perform this average cost computation for all the sampled ${\boldsymbol{\theta}}^i, i \in \{1,2,\ldots,N \}$, i.e., 
we compute $\lambda({\boldsymbol{\theta}}^1), \lambda({\boldsymbol{\theta}}^{2}),
\ldots, \lambda({\boldsymbol{\theta}}^{N})$. We then update the current estimates of the
meta-parameters based on only those sampled ${\boldsymbol{\theta}}$'s (policies) whose
average cost is lower than a threshold level (see \eqref{eqn:cross-entropy-upd}).
\begin{remark}
The Cross Entropy method is an adaptive importance sampling \cite{bucklew2004introduction} technique. 
The specific distribution from which the parameter $\theta$ is sampled is known as the importance sampling distribution. 
The Gaussian distribution used as the importance sampling distribution yields analytical updation formulas \eqref{eqn:cross-entropy-upd} for the mean and variance parameters (see \cite{kroese2006cross}). 
For this reason, it is convenient to use the Gaussian vectors for the policy parameters. 
\end{remark}

\subsubsection{Policy Parameterization}
\label{sec:policy-parameterization}
Let $\lambda(\boldsymbol{\theta})$ be the average cost of the system when parameterized by $\boldsymbol{\theta} = (\theta_1,\ldots,\theta_M)^\top$.
An optimal policy $\boldsymbol{\theta}^{*}$ minimizes the average cost over all parameterizations. 
That is,
\begin{displaymath}
 \boldsymbol{\theta}^{*} = \amin {\boldsymbol{\theta} \in \R^{M}} \: \lambda(\boldsymbol{\theta}).
\end{displaymath}
An example of parameterized randomized policies, which we use for the experiments (involving state aggregation) 
in this paper are the parameterized Boltzmann policies having the following form:
\begin{equation}
 \pi^{\theta}(s,a) = \frac{e^{ \boldsymbol{\theta^\top}\boldsymbol{\phi}_{sa} }} {\sum\limits_{b \in A(s)} e^{ \boldsymbol{\theta^\top}\boldsymbol{\phi}_{sb} }}
\qquad \forall s \in S^{'},\; \forall a \in A^{'}(s),
\end{equation}
where $\phi_{sa}$ is an $M$-dimensional feature vector for the aggregated state-action tuple $(s,a)$ and $\phi_{sa} \in \R^{M}$.
The parameterized Boltzmann policies are often used in approximation techniques (\cite{bhatnagar2009natural,bhatnagar2013feature,abounadi2001learning,sutton1998introduction,NIPS1999_1713}) which deal with randomized policies.
\begin{remark}
The probability distribution over actions is parameterized by $\boldsymbol{\theta}$ in the cross entropy method.
Since actions in every state need to be explored, the distribution needs to assign a non-zero
probability for every action feasible in a state. Hence the probability distribution
must be chosen based on these requirements. The Boltzmann distribution for action selection
 fits these requirements and is a frequently used distribution 
in the literature (see \cite{sutton1998introduction,NIPS1999_1713}) on policy 
learning and approximation algorithms.
\end{remark}
As noted in the beginning of this subsection, the parameters $\theta_1,\ldots,\theta_M$ are samples from the 
distributions $N(\mu_i,\sigma_i), 1 \leq i \leq M$, i.e., $\theta_i \sim N(\mu_i,\sigma_i), \; \forall i$.
\subsubsection{Method}
\label{sec:ce-method}
Initially $M$ parameter tuples $\{(\mu^{1}_i,\sigma^{1}_i), \; 1 \leq i \leq M\}$ for the normal distribution are picked. 
The policy is approximated using the Boltzmann distribution. The method comprises of two phases.
In the first phase trajectories corresponding to sample ${\boldsymbol{\theta}}$s are simulated
and the average cost of each policy is computed. The second phase inolves updation of the meta parameters.
The algorithm proceeds as follows: \\ Let iteration index $t$ be set to $1$. \\
\emph{First Phase:}
\begin{enumerate}
\item Sample parameters $\boldsymbol{\theta}^1,\ldots,\boldsymbol{\theta}^N$ are drawn independently from the normal distributions 
$\{N(\mu^{t}_i,\sigma^{t}_i), \text{  } 1 \leq i \leq M\}$. 
For $1 \leq j \leq N$, $\boldsymbol{\theta}^j \in \R^{M \times 1}$ and $\theta^j_i$ is sampled from $N(\mu^{t}_i,\sigma^{t}_i)$.

\item A trajectory is simulated using probability distribution $\pi^{\boldsymbol{\theta}^{j}}(s,a), \; 1 \leq j \leq N$.
Hence at every aggregate state $s$ an aggregate action $a$ is picked according to $\pi^{\boldsymbol{\theta}^{j}}(s,.)$.
Once an aggregate action is chosen for a state, the energy distribution is carried out as described in Section \ref{subsubsec:en-distr}.

\item The average cost per step of trajectory $j$ is $\lambda(\boldsymbol{\theta}^{j})$
and is computed for the trajectory simulated using $\boldsymbol{\theta}^j$. 
By abuse of notation we denote $\lambda(\boldsymbol{\theta}^{j})$ as $\lambda_j$.
\setcounter{mycounter}{\theenumi}
\end{enumerate}
\emph{Second Phase:}
\begin{enumerate}
\setcounter{enumi}{\themycounter}
 \item A quantile value $\rho \in(0,1)$ is selected. 
 \item The average cost values are sorted in descending order. Let $\lambda_1,\ldots,\lambda_N$
be the sorted order. Hence $\lambda_1 \geq \ldots \geq \lambda_N$.
 \item The ${\lceil(1-\rho)\rceil N}^{th}$ average cost is picked as the threshold level. So, let $\hat{\lambda}_c = \lambda_{\lceil(1-\rho)\rceil N} $. 
 \item The meta-parameters $\{(\mu^{t}_i,\sigma^{t}_i), \text{ } 1 \leq i \leq M\}$ are updated (refer \cite{menache2005basis}) in this phase. In iteration $t$, the parameters are updated in the second phase in the following manner:
\begin{equation}
\label{eqn:cross-entropy-upd}
\begin{split}
 \mu^{(t+1)}_{\!i} &= \frac{ {\sum\limits_{j = 1}^{N}} I_{\{\lambda_j \leq \hat{\lambda}_c\} } \theta^j_{i}}
		{ { \sum\limits_{j = 1}^{N}} I_{\{\lambda_j \leq \hat{\lambda}_c\} }},
\\
 \sigma^{2^{(t+1)}}_{\!i} &= \frac{ {\sum\limits_{j = 1}^{N}} I_{\{\lambda_j \leq \hat{\lambda}_c\} } \left(\theta^j_{i} - \mu^{(t+1)}_{\!i}\right)^2 }
			      { {\sum\limits_{j = 1}^{N}} I_{\{\lambda_j \leq \hat{\lambda}_c\} }} .
\end{split}
\end{equation}
 \item Set $t = t +1$ .
\end{enumerate}
Steps $1$-$6$ are repeated until the variances of the distributions converge to zero. 
Let $\boldsymbol{\mu} = (\mu_1,\ldots,\mu_M)^\top$ be the 
vector of means of the converged distributions. The near-optimal SRP found by the algorithm is $\hat{\pi}$ where
 \[ \hat{\pi}(s,a) = \frac{ e^{  \boldsymbol{\mu}\boldsymbol{\phi}(s,a) } } 
		    {\sum\limits_{b \in A(s)} e^{ \boldsymbol{\mu}\boldsymbol{\phi}(s,b) }},	\qquad \forall s \in S^{'}, a \in A^{'}(s). \]
\begin{remark}
The computational complexity of the cross entropy method is dependent on the number of updations required to 
arrive at the optimal parameter vector and the dimension of the vector. For instance in the case of four nodes, 
with data and energy buffer sizes being $30$, the cross entropy method requires $10^3$ sample
trajectories for a hyperparameter $(\mu,\sigma)$ vector of dimension $50$. The parameter $\boldsymbol{\theta}$ is updated
over $10^3$ iterations to arrive at the optimal parameter vector.
\end{remark}

\begin{remark}
The heuristic cross-entropy algorithm solves hard optimization problems. It is an iterative scheme and requires 
multiple samples to arrive at the solution. In general one assumes that the parameter $\theta$ is unknown (non-random variable) 
and uses actor-critic architecture to obtain locally optimal policy. However, obtaining gradient estimates in 
actor-critic architecture is hard as it leads to large variance \cite{konda2003onactor}. On the other hand, in 
our work, we let the parameter $\theta$ be a random variable and assume probability distibution over $\theta$ 
with hyperparameter $(\mu,\sigma)$ and use cross-entropy method to tune the hyperparameters. Cross entropy 
method is simple to implement, parallelizable and does not require gradient estimates. To the best of our 
knowledge, we are the first to combine the cross-entropy with state aggregation and apply it to a real world
problem. In \cite{mannor2003cross}, the authors sampled from the entire transition probability matrix to calculate
the score function and tested on problems with only small state-action space.
\end{remark}

\section{Simulation Results}
\label{sec:results}
In this section we show simulation results for the energy sharing algorithms we described in Sections \ref{sec:2nodes-algo} and \ref{sec:approximation}. 
For the sake of comparison we implement the greedy heuristic method in the case when the function $g$
has a non-linear form.
Also, we implement Q-learning to learn optimal policies for the case where we consider the sum of the data at all nodes and the available energy
as the state. These methods are as follows:
\begin{enumerate}	
 \item Greedy: This method takes as input the level of data $q^i_{k}$ at all nodes and supplies the energy based on the requirement. 
Since $g(x)$ is the number of data bits that can be sent given $x$ bits of allocated energy, 
$g^{-1}(y)$ gives the amount of energy required to send $y$ bits of data.
Suppose the energy available in the source is $e_k$ at stage $k$. The greedy algorithm then provides $t_k$ units of energy, 
where \(t_k = \min \left(e_k,{\sum\limits_{i = 1}^n}  g^{-1}(q^i_{k})\right)\). The energy bits are then shared between the
nodes based on the proportion of the requirement of the nodes.

\item Combined Nodes Q-learning: The state considered here is the sum of the data at all nodes and the available energy. 
Let the state space be $S_c$ and action space be $A_c$. So state $s_k = \left(\sum\limits^n_{i = 1} q^{i}_{k},E_k\right)$.
The control specified is $t_k$ which is the total energy that needs to be distributed between the nodes. 
In contrast to the action space in Section \ref{sec:es-prob-as-an-mdp}, here the exact split is not decided upon. Instead, this method
finds the total optimal energy to be supplied.
The algorithm in Section \ref{sec:ql} is then used to learn the optimal policies for the state-action space described here.

\end{enumerate}
In the above described methods, after an action $t_k$ is selected, the proportion of data in the nodes is computed. 
Thus $p_i = \frac{q^{i}_{k}}{\sum\limits^n_{j = 1} q^{j}_{k}}, \; 1 \leq i \leq n$ is computed at time $k$, where $0 \leq p_i \leq 1$ and
$\sum\limits^n_{i = 1} p_i = 1$.
Each of the $t_k$ bits of energy is then shared based on these probabilities. Let $u_i$ be the number of bits provided
to node $i$. Then in the case of the greedy method, $\sum\limits^n_{i = 1} u_i = t_k$, while in the combined nodes Q-learning
method, $\sum\limits^n_{i = 1} u_i \leq t_k$.
\subsection{Experimental Setup}
\label{sec:exp-setup}
\begin{itemize}
\item The algorithms described in Section \ref{sec:2nodes-algo} are simulated with two nodes and an energy source.
We consider the following settings:
\begin{enumerate}
 \item For the case of jointly Markov data arrival and Markovian energy arrival processes, we consider energy buffer
 size of $20$ and data buffer size of $10$. The data arrivals evolve as: $X_k = AX_{k-1} + \boldsymbol{\omega}$, 
where $A$ is a fixed $2 \times 2$ matrix of coefficients and $\boldsymbol{\omega} = (\omega_1,\omega_2)^{\top}$ 
is a $2 \times 1$ random noise (or disturbance) vector. 
Here $A = \bigl(\begin{smallmatrix}
0.2&0.3&\\0.3&0.2
\end{smallmatrix}\bigr) $
The energy arrival evolves as $Y_k = bY_{k-1} + \chi$, where $\chi$ is also random noise (or disturbance) variable and  
$b = 0.5$ is a fixed coefficient. The components in vector $\boldsymbol{\omega}$ and $\chi$ are Poisson distributed.
In the simulations, we vary the mean of the random noise variable $\omega_1$, while means of $\omega_2$, $\chi$ are kept constant.
 \item For the case of i.i.d data and energy arrivals the data and energy buffer sizes are fixed at $14$.
$X^1, \, X^2 \text{ and } \, Y$ are distributed according to the Poisson distribution.
In the simulations, the mean data arrival at node two is fixed while that at node one is varied.
\end{enumerate}
\item The algorithms described in Section \ref{sec:approximation} are simulated with four nodes and an energy source.
We consider the following settings:
\begin{enumerate}
 \item For the case of jointly Markov data arrival and Markovian energy arrival processes, we consider energy buffer
  size of $25$ and data buffer size of $10$. The data arrivals evolve as: 
$X_k = AX_{k-1} + \boldsymbol{\omega}$, 
where $A$ is a fixed $4 \times 4$ matrix of coefficients and $\boldsymbol{\omega} = (\omega_1,\omega_2,\omega_3,\omega_4)^{\top}$ 
is a $4 \times 1$ random noise (or disturbance) vector. 
Here \\$A = \biggl(\begin{smallmatrix}
0.1&0.1&0.1&0.2\\ 0.1&0.1&0.2&0.1\\0.1&0.2&0.1&0.1\\0.2&0.1&0.1&0.1
\end{smallmatrix}\biggr) $
The energy arrival evolves as $Y_k = bY_{k-1} + \chi$, where $\chi$ is also random noise (or disturbance) variable and  
$b = 0.5$ is a fixed coefficient. The components in vector $\boldsymbol{\omega}$ and $\chi$ are Poisson distributed.
In the simulations, we vary the mean of the random noise variable $\omega_1$, while means of $\omega_2$-$\omega_4$, $\chi$ are kept constant.
The energy buffer had $4$ partitions while the data buffer had $2$ partitions.
 \item For the case of i.i.d data and energy arrivals, the buffer sizes are taken to be $30$ each. 
The data and energy buffers are clustered into $6$ partitions. 
$X^1,X^2,X^3,X^4,Y$ are Poisson distributed. 
In these experiments, the mean data arrivals at nodes $2,3 \text{ and }4$ are fixed while the same at node $1$ is varied.
\end{enumerate}
\end{itemize}
For all Q-learning algorithms ($\epsilon$-greedy, UCB based and Combined Nodes), 
stepsize $\alpha=0.1$ is used in the updation scheme. For the $\epsilon$-greedy method, $\epsilon = 0.1$
is used for exploration. In the UCB exploration mechanism, the value of $\beta$ is set to $1$. In our experimental simulations,
we consider the function $g(x) = \ln(1+x)$ for the i.i.d case and $g(x) = 2\ln(1+x)$ for the non-i.i.d case.

\subsection{Results}
Figs. \ref{plot:2n-non-iid}, \ref{plot:2n-demax14}, \ref{plot:2n-hexp-demax14}, \ref{plot:2n-demax30}
and \ref{plot:2n-r2-0.3} show the performance of the algorithms explained in 
Section \ref{sec:2nodes-algo}. The simulations are carried out with
two nodes and a single source. Similarly, Figs. \ref{plot:4n-non-iid}, \ref{plot:4n-basic} and \ref{plot:4n-hyperexp} show the 
performance comparisons of our algorithms explained in Section \ref{sec:approximation} with other algorithms.
The simulations in this case are carried out with four nodes and a single source.
In Figs. \ref{plot:2n-non-iid}, \ref{plot:4n-non-iid} jointly Markov data arrival and Markovian energy arrival 
processes are considered and the noise in data and energy arrival at Node 1, i.e. $\E {\omega_1}$ is varied
while that at the other nodes is kept constant. The i.i.d case of data and energy arrivals is considered
in Figs. \ref{plot:2n-demax14},\ref{plot:2n-hexp-demax14}, \ref{plot:2n-demax30}, \ref{plot:4n-basic} and \ref{plot:4n-hyperexp}.
In these plots, the mean data arrival at Node 1 ($\E {X^1}$) is 
varied while keeping that at the other node(s) constant. 
Figs. \ref{plot:2n-non-iid}-\ref{plot:2n-ql-qlsa-gr-cnql} show
the normalized long-run average cost of the policies determined by the algorithms along the y-axis. 
The mean energy arrival is also fixed.

The Q-learning algorithm is designed to learn optimal policies,
hence it outperforms other algorithms, as shown in Figs. \ref{plot:2n-non-iid},\ref{plot:2n-demax14}, 
\ref{plot:2n-hexp-demax14}, \ref{plot:2n-demax30} and \ref{plot:2n-r2-0.3}.
The policy learnt by our algorithm does better compared to the greedy policy and the policy obtained from the combined nodes Q-learning method.
Note that Q-learning on combined nodes learns the total energy to be distributed and not the exact split.
Hence its performance is poor compared to Q-learning on our problem MDP. Thus, sharing energy by considering the total amount of data 
in all the nodes is not optimal.
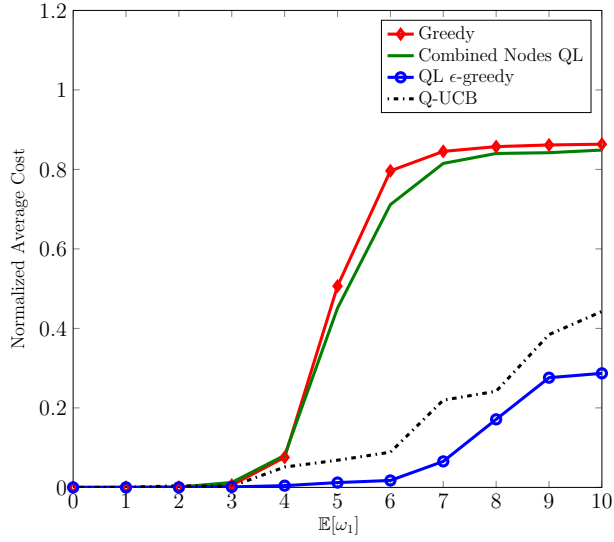
\begin{figure}
 \begin{center}
   \scalebox{0.58}{
%
%
%
\definecolor{mycolor1}{rgb}{0.04314,0.51765,0.78039}%
\definecolor{mycolor2}{rgb}{0.00000,0.49804,0.00000}%
\begin{tikzpicture}

\begin{axis}[%
width=4.7715625in,
height=4.3076875in,
scale only axis,
xmin=0,
xmax=10,
xtick={0,1,2,3,4,5,6,7,8,9,10},
xlabel={$\E{\omega_1}$},
ymin=0.000000115,
ymax=1.2,
ylabel={Normalized Average Cost},
legend style={draw=black,fill=white,legend cell align=left}
]

\addplot [color=red,solid,line width=2.0pt,mark=diamond,mark options={solid},mark size=3pt] table[row sep=crcr]
{
0        2.50e-007  	\\
1	 2.09e-006  	\\
2	 0.000311055	\\
3	 0.00573778 	\\
4	 0.0756285  	\\
5	 0.506367   	\\
6	 0.796353   	\\
7	 0.845342   	\\
8	 0.857313   	\\
9	 0.86151    	\\
10	 0.863422  	\\
};
\addlegendentry{Greedy};

\addplot [color=mycolor2,solid,line width=2.0pt] table[row sep=crcr]
{
0  	 1.20e-007     	\\
1	 0.00014819   	\\
2	 0.00094484   	\\
3	 0.0123985    	\\
4	 0.0805757    	\\
5	 0.450983     	\\
6	 0.711415     	\\
7	 0.815113     	\\
8	 0.840066     	\\
9	 0.842051     	\\
10	 0.848643      	\\
};
\addlegendentry{Combined Nodes QL};

\addplot [color=blue,solid,line width=2.0pt,mark=o,mark options={solid},mark size=3pt] table[row sep=crcr]
{
0 	2.20e-007      	\\
1	3.02e-006	\\
2	0.000146315	\\
3	0.00122754	\\
4	0.00440182	\\
5	0.012111	\\
6	0.0174645	\\
7	0.0654471	\\
8	0.171207	\\
9	0.276		\\
10	0.286947	\\
};
\addlegendentry{QL $\epsilon$-greedy};

\addplot [color=black,dash pattern=on 1pt off 3pt on 3pt off 3pt,line width=2.0pt]
  table[row sep=crcr]{
0 	1.15e-007      	\\
1	0.000212	\\
2	0.00402185	\\
3	0.0029398	\\
4	0.0511799	\\
5	0.0684		\\
6	0.0885		\\
7	0.21948		\\
8	0.241853	\\
9	0.384456	\\
10	0.442614	\\
};
\addlegendentry{Q-UCB};

\end{axis}
\end{tikzpicture}
    \caption{$E_{\!{MAX}} = 20,\,D_{\!{MAX}} = 10$, $\omega_1,\,\omega_2,\,\chi$ are Poisson distributed 
		 with $\E {\omega_2} = 1.0$,$\E {\chi} = 20$, } 
     \label{plot:2n-non-iid}
 \end{center}
\end{figure}

\begin{figure}    
    \begin{subfigure}[t]{0.5\columnwidth}	
      \begin{center}
        \scalebox{0.5}{
%
%
%
\definecolor{mycolor1}{rgb}{0.04314,0.51765,0.78039}%
\definecolor{mycolor2}{rgb}{0.00000,0.49804,0.00000}%
\begin{tikzpicture}

\begin{axis}[%
width=4.7715625in,
height=4.3076875in,
scale only axis,
xmin=0,
xmax=2.5,
xtick={0,0.25,0.5,0.75,1.0,1.25,1.5,1.75,2.0,2.25,2.5},
xlabel={$\E {X^1}$},
ymin=0.4,
ymax=2.4,
ylabel={Normalized Average Cost},
legend style={draw=black,fill=white,legend cell align=left}
]

\addplot [color=red,solid,line width=2.0pt,mark=diamond,mark options={solid},mark size=3pt] table[row sep=crcr]
{
0.0     0.460701\\
0.25	0.50831\\
0.5	0.58602\\
0.75	0.7306\\
1	0.99164\\
1.25	1.32357\\
1.5	1.54487\\
1.75	1.6466\\
2	1.6944\\
2.25	1.71966\\
2.5	1.73438\\
};
\addlegendentry{Greedy};

\addplot [color=mycolor2,solid,line width=2.0pt] table[row sep=crcr]
{
0.0     0.454837
0.25	0.487531\\
0.5	0.56333\\
0.75	0.643118\\
1	0.88301\\
1.25	0.885\\
1.5	1.17746\\
1.75	1.41471\\
2	1.53603\\
2.25	1.59425\\
2.5	1.62519\\
};
\addlegendentry{Combined Nodes QL};

\addplot [color=blue,solid,line width=2.0pt,mark=o,mark options={solid},mark size=3pt] table[row sep=crcr]
{
0.0	0.452877\\
0.25	0.4684\\
0.5	0.487809\\
0.75	0.5091\\
1	0.52\\
1.25	0.601472\\
1.5	0.817782\\
1.75	1.08151\\
2	1.28094\\
2.25	1.34175\\
2.5	1.35916\\
};
\addlegendentry{QL $\epsilon$-greedy};


\addplot [color=black,dash pattern=on 1pt off 3pt on 3pt off 3pt,line width=2.0pt]
  table[row sep=crcr]{
0.0     0.451632\\
0.25	0.46019\\
0.5	0.47223\\
0.75	0.483316\\
1	0.505403\\
1.25	0.574661\\
1.5	0.717441\\
1.75	0.960172\\
2	1.21124\\
2.25	1.297\\
2.5	1.31592\\
};
\addlegendentry{Q-UCB};


\end{axis}
\end{tikzpicture}
	 \caption{$X^1,\,X^2,\,Y$ are Poisson distributed with $\E Y = 13$, $\E {X^2} = 1.0$ }	
	\label{plot:2n-demax14}
      \end{center}
    \end{subfigure}~
        ~ 
    \begin{subfigure}[t]{0.5\columnwidth}
	\begin{center}
        \scalebox{0.5}{
%
%
%
\definecolor{mycolor1}{rgb}{0.04314,0.51765,0.78039}%
\definecolor{mycolor2}{rgb}{0.00000,0.49804,0.00000}%
\begin{tikzpicture}

\begin{axis}[%
width=4.7715625in,
height=4.3076875in,
scale only axis,
xmin=0,
xmax=3,
xtick={0,0.25,0.5,0.75,1.0,1.25,1.5,1.75,2.0,2.25,2.5,2.75,3.0},
xlabel={$\E {X^1}$},
ymin=0.4,
ymax=2.4,
ylabel={Normalized Average Cost},
legend style={draw=black,fill=white,legend cell align=left}
]
\addplot [color=red,solid,line width=2.0pt,mark=diamond,mark options={solid},mark size=3pt] table[row sep=crcr]
{
0.0   0.839145\\
0.25	0.983704\\
0.5	1.19488\\
0.75	1.49\\
1	1.69173\\
1.25	1.77456\\
1.5	1.8107\\
1.75	1.829\\
2	1.84067\\
2.25	1.85776\\
2.5	1.86265\\
2.75	1.876\\
3	1.898\\
};
\addlegendentry{Greedy};

\addplot [color=mycolor2,solid,line width=2.0pt] table[row sep=crcr]
{
0.0     0.5799\\
0.25	0.6563\\
0.5	0.844\\
0.75	0.8672\\
1	0.927\\
1.25	1.06304\\
1.5	1.29\\
1.75	1.46252\\
2	1.48207\\
2.25	1.5617\\
2.5	1.66521\\
2.75	1.67115\\
3	1.80174\\
};
\addlegendentry{Combined Nodes QL};

\addplot [color=blue,solid,line width=2.0pt,mark size=3pt,mark=o,mark options={solid,fill=black}]
  table[row sep=crcr]{0.0    0.52363\\
0.25	0.61\\
0.5	0.68314\\
0.75	0.76\\
1	0.85\\
1.25	0.90615\\
1.5	1.07435\\
1.75	1.25413\\
2	1.373\\
2.25	1.497\\
2.5	1.55012\\
2.75	1.61741\\
3	1.713\\
};
\addlegendentry{QL $\epsilon$-greedy};

\addplot [color=black,dash pattern=on 1pt off 3pt on 3pt off 3pt,line width=2.0pt]
  table[row sep=crcr]{
0.0     0.504847\\
0.25	0.582083\\
0.5	0.642382\\
0.75	0.696701\\
1	0.77869\\
1.25	0.85472\\
1.5	1.04473\\
1.75	1.1082\\
2	1.2008\\
2.25	1.21196\\
2.5	1.50078\\
2.75	1.52552\\
3.0	1.70201\\
};
\addlegendentry{Q-UCB};


\end{axis}
\end{tikzpicture}
         \caption{$X^1$: Poisson distributed, $X^2$: hyperexponential distributed and $Y$: Exponential distributed and  $\E {X^2} = 0.625$, $\E Y = 10 $ }       
        \label{plot:2n-hexp-demax14}
      \end{center}
    \end{subfigure}
    \caption{Performance comparison of policies when $E_{\!{MAX}} = D_{\!{MAX}} = 14$}
\end{figure}

\begin{figure}[h]
\begin{center}
\scalebox{0.55}{
%
%
%
\definecolor{mycolor1}{rgb}{0.04314,0.51765,0.78039}%
\definecolor{mycolor2}{rgb}{0.00000,0.49804,0.00000}%
\begin{tikzpicture}

\begin{axis}[%
width=5.7715625in,
height=4.3076875in,
scale only axis,
xmin=0,
xmax=3.0,
xtick={0,0.25,0.5,0.75,1.0,1.25,1.5,1.75,2.0,2.25,2.5,2.75,3.0},
xlabel={$\E {X^1}$},
ymin=0,
ymax=2.2,
ylabel={Normalized Average Cost},
legend style={draw=black,fill=white,legend cell align=left}
]

\addplot [color=red,solid,line width=2.0pt,mark=diamond,mark options={solid},mark size=3pt] table[row sep=crcr]
{
0.0    0.0313\\
0.25	0.05779\\
0.5	0.0721\\
0.75	0.1068\\
1	0.193\\
1.25	0.3506\\
1.5	0.53748\\
1.75	0.7555\\
2	1.1338\\
2.25	1.146\\
2.5	1.5635\\
2.75	1.6\\
3	1.6467\\
};
\addlegendentry{Greedy};

\addplot [color=mycolor2,solid,line width=2.0pt] table[row sep=crcr]
{
0.0     0.005902\\
0.25	0.022\\
0.5	0.03015\\
0.75	0.041586\\
1	0.0689\\
1.25	0.21716\\
1.5	0.476541\\
1.75	0.748655\\
2	0.92657\\
2.25	1.11321\\
2.5	1.32579\\
2.75	1.49158\\
3	1.62603\\
};
\addlegendentry{Combined Nodes QL};

\addplot [color=blue,solid,line width=2.0pt,mark=o,mark options={solid},mark size=3pt] table[row sep=crcr]
{
0.0    0.004\\
0.25	0.010656\\
0.5	0.0129\\
0.75	0.03486\\
1	0.04336\\
1.25	0.13525\\
1.5	0.31058\\
1.75	0.70355\\
2	0.8558\\
2.25	0.96741\\
2.5	0.9819\\
2.75	0.9857\\
3	1.12592\\
};
\addlegendentry{QL $\epsilon$-greedy};

\addplot [color=black,dash pattern=on 1pt off 3pt on 3pt off 3pt,line width=2.0pt]
  table[row sep=crcr]{
0.0     0.0015226\\
0.25	0.007632\\
0.5	0.0105\\
0.75	0.0175\\
1	0.03105\\
1.25	0.0595355\\
1.5	0.08033\\
1.75	0.11136\\
2	0.2839\\
2.25	0.706731\\
2.5	0.87058\\
2.75	0.964\\
3	1.01043\\
};
\addlegendentry{Q-UCB};
%

\end{axis}
\end{tikzpicture}
\caption{Performance comparison of policies with $E_{\!{MAX}} = D_{\!{MAX}} = 30$ when $X^1,\,X^2,\,Y$ are Poisson distributed with $\E Y = 25, \E {X^2} = 1.0$ }
\label{plot:2n-demax30}
\end{center}
\end{figure}
Figs. \ref{plot:4n-basic} and \ref{plot:4n-hyperexp} show the long-run normalized average costs of the policies obtained from the Greedy method and the 
algorithms described in Section \ref{sec:approximation}. Since our algorithms are model-free, irrespective of the distributions of energy and data
arrival (see Figs. \ref{plot:4n-basic} and \ref{plot:4n-hyperexp}), our algorithms learn the optimal or near-optimal policies.
These plots show that our approximation algorithms outperform the greedy and combined nodes Q-learning methods.
It can be observed that the gap between the average costs obtained from the combined nodes Q-learning method and the approximate learning
algorithm (see Section \ref{sec:approx-alg}) increases with an increase in the number of nodes. This is clear from  
Figs. \ref{plot:2n-demax14} and \ref{plot:4n-basic}. This occurs because the combined nodes Q-learning method wastes
energy and the amount of wastage increases with an increase in the number of nodes.
\begin{figure}[h]
\begin{center} 
        \scalebox{0.58}{
%
%
%
\definecolor{mycolor1}{rgb}{0.00000,0.49804,0.00000}%
\definecolor{mycolor2}{rgb}{0.8,0.3,0.1}%
\begin{tikzpicture}

\begin{axis}[%
width=4.82222222222222in,
height=3.80333333333333in,
scale only axis,
xmin=0,
xmax=10,
xtick={0,1,2,3,4,5,6,7,8,9,10},
xlabel={$\E{\omega_1}$},
ymin=0,
ymax=4.0,
ylabel={Normalized Average Cost},
legend style={draw=black,fill=white,legend cell align=left}
]

\addplot [color=mycolor1,dash pattern=on 1pt off 3pt on 3pt off 3pt,line width=2.0pt] 
  table[row sep=crcr]{
0 	9.50E-004    	\\
1	0.0053468	\\
2	0.190668	\\
3	1.25		\\
4	1.98824		\\
5	2.23342		\\
6	2.33247		\\
7	2.37502		\\
8	2.39375		\\
9	2.40288		\\
10	2.40755		\\
};
\addlegendentry{Greedy};

\addplot [color=mycolor2,solid,line width=2.0pt] 
  table[row sep=crcr]{
0 	0.000955635    	\\
1	0.0210784	\\
2	0.407937	\\
3	1.67914		\\
4	2.26454		\\
5	2.45073		\\
6	2.5233		\\
7	2.5505		\\
8	2.56045		\\
9	2.56543		\\
10	2.56133		\\
};
\addlegendentry{Combined Nodes QL};

\addplot [color=blue,dashed,line width=2.0pt,mark=o,mark options={solid},mark size=3.1pt]  table[row sep=crcr]
{
0 	0.0001  \\
1	0.004123\\
2	0.1834  \\
3	1.23202 \\
4	1.8345  \\
5	2.1     \\
6	2.234   \\
7	2.303   \\
8 	2.3598	\\
9	2.3724	\\
10	2.3775	\\
};
\addlegendentry{Cross Entropy};

\addplot [color=black,solid,line width=2.0pt,mark=+,mark options={solid},mark size=4pt]
  table[row sep=crcr]{
0 	0.0009    	\\
1	0.003133	\\
2	0.176634	\\
3	1.13472		\\
4	1.70093		\\
5	1.85774		\\
6	1.93827		\\
7	1.97		\\
8	1.99433		\\
9	2.00347		\\
10	2.00679		\\
};
\addlegendentry{QL-SA $\epsilon-$greedy};

\addplot [color=red,solid,line width=2.0pt,mark=diamond,mark size=3pt,mark options={solid}]
  table[row sep=crcr]{
0 	0.00093    	\\
1	0.0032		\\
2	0.18		\\
3	1.14		\\
4	1.70119		\\
5	1.9		\\
6	1.95		\\
7	1.977		\\
8	1.995		\\
9	2.02		\\
10	2.05		\\
};
\addlegendentry{QL-SA-UCB};

\end{axis}
\end{tikzpicture}
\end{center}
\caption{$E_{\!{MAX}} = 25,\,D_{\!{MAX}} = 10$, $\omega_1$-$\omega_5$ are Poisson distributed 
		 with $\E {\omega_2} = \E {\omega_3} = \E {\omega_4} = 1.0$, $\E {\chi} = 5$}
        \label{plot:4n-non-iid}
    \end{figure}
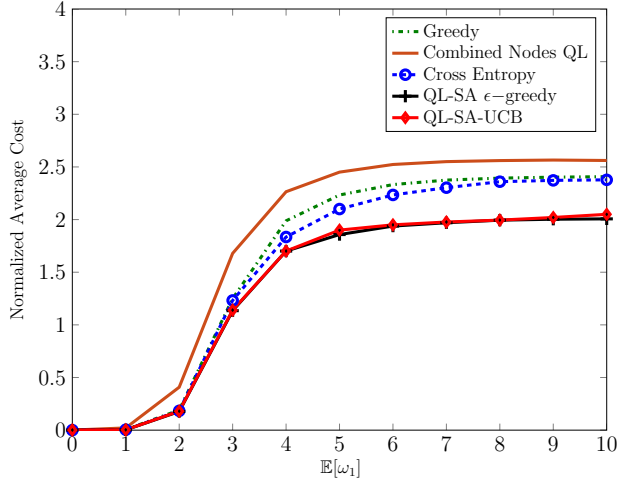
\begin{figure}
     \begin{subfigure}[t]{0.5\columnwidth}
      \begin{center}
       \scalebox{0.48}{
%
%
%
\definecolor{mycolor1}{rgb}{0.00000,0.49804,0.00000}%
\definecolor{mycolor2}{rgb}{0.8,0.3,0.1}%
\begin{tikzpicture}

\begin{axis}[%
width=4.82222222222222in,
height=3.80333333333333in,
scale only axis,
xmin=0,
xmax=3.0,
xtick={0,0.25,0.5,0.75,1.0,1.25,1.5,1.75,2.0,2.25,2.5,2.75,3.0},
xlabel={$\E {X^1}$},
ymin=1.0,
ymax=5.2,
ylabel={Normalized Average Cost},
legend style={draw=black,fill=white,legend cell align=left}
]

\addplot [color=mycolor1,dash pattern=on 1pt off 3pt on 3pt off 3pt,line width=2.0pt] 
  table[row sep=crcr]{
0.0     2.26464\\
0.25	2.45178\\
0.5	2.60852\\
0.75	2.81272\\
1	3.08813\\
1.25	3.38143\\
1.5	3.56286\\
1.75	3.63602\\
2	3.66554\\
2.25	3.67958\\
2.5	3.688\\
2.75	3.69142\\
3.0     3.69423\\
};
\addlegendentry{Greedy};

\addplot [color=mycolor2,solid,line width=2.0pt] 
  table[row sep=crcr]{
0.0     2.40182\\
0.25	2.52209\\
0.5	2.68431\\
0.75	2.87283\\
1	3.14467\\
1.25	3.4244\\
1.5	3.55307\\
1.75	3.61262\\
2	3.63606\\
2.25	3.65276\\
2.5	3.65374\\
2.75	3.66073\\
3.0     3.669\\
};
\addlegendentry{Combined Nodes QL};

\addplot [color=blue,dashed,line width=2.0pt,mark=o,mark options={solid},mark size=3.1pt] 
  table[row sep=crcr]{
0.0     1.9678\\
0.25	2.07415\\
0.5	2.09551\\
0.75	2.106541\\
1	2.17466\\
1.25	2.19491\\
1.5	2.2685\\
1.75	2.66373\\
2	2.99332\\
2.25	3.081\\
2.5	3.185\\
2.75	3.229\\
3.0     3.33073\\
};
\addlegendentry{Cross Entropy};

\addplot [color=black,solid,line width=2.0pt,mark=+,mark options={solid},mark size=4pt]
  table[row sep=crcr]{
0.0     1.73641\\
0.25	1.98495\\
0.5	2.0461\\
0.75	2.07561\\
1	2.09549\\
1.25	2.1009\\
1.5	2.12272\\
1.75	2.56983\\
2	2.73334\\
2.25	2.83382\\
2.5	2.84077\\
2.75	2.89939\\
3.0     2.92263\\
};
\addlegendentry{QL-SA $\epsilon-$greedy};

\addplot [color=red,solid,line width=2.0pt,mark=diamond,mark size=3pt,mark options={solid}]  table[row sep=crcr]
{
0.0     1.46796\\
0.25	1.74529\\
0.5	1.7623\\
0.75	1.78915\\
1	1.79\\
1.25	1.9173\\
1.5	2.07418\\
1.75	2.13246\\
2	2.2568\\
2.25	2.39512\\
2.5	2.41812\\
2.75	2.444\\
3.0     2.469\\
};
\addlegendentry{QL-SA-UCB};

\end{axis}
\end{tikzpicture}
	\caption{$X^1,\,X^2,\,X^3,\,X^4,Y$ are Poisson distributed with 
		  $\E {X^2} = \E {X^3} = \E {X^4} = 1.0$, $\E Y = 25$ }
	\label{plot:4n-basic}
	\end{center}
    \end{subfigure}~
    \begin{subfigure}[t]{0.5\columnwidth}
	\begin{center}
        \scalebox{0.48}{
%
%
%
\definecolor{mycolor1}{rgb}{0.00000,0.49804,0.00000}%
\definecolor{mycolor2}{rgb}{0.8,0.3,0.1}%
\begin{tikzpicture}

\begin{axis}[%
width=4.82222222222222in,
height=3.80333333333333in,
scale only axis,
xmin=0,
xmax=2.5,
xtick={0,0.25,0.5,0.75,1.0,1.25,1.5,1.75,2.0,2.25,2.5},
xlabel={$\E {X^1}$},	
ymin=1.0,
ymax=5.1,
ylabel={Normalized Average Cost},
legend style={draw=black,fill=white,legend cell align=left}
]

\addplot [color=mycolor2,solid,line width=2.0pt] table[row sep=crcr]
{
0.0     2.196\\
0.25	2.54327\\
0.5	2.86187\\
0.75	3.28141\\
1	3.54063\\
1.25	3.61407\\
1.5	3.65144\\
1.75	3.65555\\
2	3.6577\\
2.25	3.66312\\
2.5	3.66483\\
};
\addlegendentry{Combined Nodes QL};

\addplot [color=mycolor1,dash pattern=on 1pt off 3pt on 3pt off 3pt,line width=2.0pt]  table[row sep=crcr]
{
0.0     2.27216\\
0.25	2.63217\\
0.5	2.97254\\
0.75	3.36831\\
1	3.61671\\
1.25	3.69542\\
1.5	3.72326\\
1.75	3.73586\\
2	3.74285\\
2.25	3.74712\\
2.5	3.75\\
};
\addlegendentry{Greedy};

\addplot [color=blue,dashed,line width=2.0pt,mark=o,mark options={solid},mark size=3.1pt] table[row sep=crcr]
{
0.0     1.35052\\
0.25	1.65214\\
0.5	1.81365\\
0.75	1.9247\\
1	2.16355\\
1.25	2.34560\\
1.5	2.65214\\
1.75	2.71305\\
2	2.85439\\
2.25	3.14576\\
2.5	3.3478\\
};
\addlegendentry{Cross Entropy};

\addplot [color=black,solid,line width=2.0pt,mark=+,mark options={solid},mark size=4pt] table[row sep=crcr]
{
0.0       1.19895\\
0.25	  1.35417\\
0.5	  1.60781\\
0.75	  1.77532\\
1.0	  1.85047\\
1.25	  1.9224\\
1.5	  2.0925\\
1.75	  2.10005\\
2.0	  2.11334\\
2.25	  2.27867\\
2.5	  2.39\\
};
\addlegendentry{QL-SA $\epsilon$-greedy};

\addplot [color=red,solid,line width=2.0pt,mark=diamond,mark size=3pt,mark options={solid}]  table[row sep=crcr]
{
0.0       1.08511\\
0.25	  1.13413\\
0.5	  1.23847\\
0.75	  1.38082\\
1.0	  1.67722\\
1.25	  1.76163\\
1.5	  1.94118\\
1.75	  1.97678\\
2.0	  2.0\\
2.25	  2.01981\\
2.5	  2.03354\\
};
\addlegendentry{QL-SA-UCB};

\end{axis}
\end{tikzpicture}
        \caption{$X^1,\,X^3,\,X^4$ are Poisson distributed with $\E {X^3} = \E {X^4} = 0.7$, $X^2$ is 
		  hyperexponentially distributed and $Y$ has the exponential distribution with $\E Y = 20$ }
        \label{plot:4n-hyperexp}
	\end{center}
    \end{subfigure}
    \caption{Performance comparison of policies when $E_{\!{MAX}} = D_{\!{MAX}} = 30$}
\end{figure}

Fig. \ref{plot:4n-sa-levels} shows the variation in average cost with different number of partitions of data and energy buffers
used in state aggregation. As the number of partitions increase, the number of clusters also increase resulting in better policies.
\begin{figure}[h]
 \begin{center}
 \scalebox{0.65}{
%
%
%
\definecolor{mycolor1}{rgb}{0.00000,0.49804,0.00000}%
\begin{tikzpicture}

\begin{axis}[%
width=4.82222222222222in,
height=4.00333333333333in,
scale only axis,
xmin=0,
xmax=3.0,
xtick={0,0.25,0.5,0.75,1.0,1.25,1.5,1.75,2.0,2.25,2.5,2.75,3.0},
xlabel={$\E {X^1}$},
ymin=1.65,
ymax=3.4,
ylabel={Normalized Average Cost},
legend style={draw=black,fill=white,legend cell align=left},
legend pos= north west
]
\addplot [color=mycolor1,solid,line width=2.0pt,mark size=2.5pt,mark=o,mark options={solid}]   table[row sep=crcr]
{
0.0     1.95131\\
0.25	2.02777\\
0.5	2.08802\\
0.75	2.08354\\
1	2.13446\\
1.25	2.185\\
1.5	2.2545\\
1.75	2.641\\
2	2.92\\
2.25	2.993\\
2.5	3.0102\\
2.75	3.0135\\
3.0     3.01216\\
};
\addlegendentry{QL-SA $\epsilon$-greedy (3 partitions)};

\addplot [color=blue,dashed,line width=2.0pt,mark size=2.5pt,mark=square,mark options={solid}]  table[row sep=crcr]
{
0.0     1.94562\\
0.25	2.01175\\
0.5	2.06065\\
0.75	2.08152\\
1	2.1321\\
1.25	2.17141\\
1.5	2.20241\\
1.75	2.60924\\
2	2.88373\\
2.25	2.9887\\
2.5	3.00086\\
2.75	3.00251\\
3.0     3.00755\\
};
\addlegendentry{QL-SA $\epsilon$-greedy (4 partitions)};

\addplot [color=black,solid,line width=2.0pt]
  table[row sep=crcr]{
0.0     1.73641\\
0.25	1.98495\\
0.5	2.04609\\
0.75	2.07561\\
1	2.09549\\
1.25	2.1009\\
1.5	2.12272\\
1.75	2.56983\\
2	2.73334\\
2.25	2.83382\\
2.5	2.84077\\
2.75	2.8994\\
3.0     2.92263\\
};
\addlegendentry{QL-SA $\epsilon$-greedy (6 partitions)};

\end{axis}
\end{tikzpicture}
 \end{center}
 \caption{Performance of QL $\epsilon$-greedy with different number of data and energy buffer partitions, when $X^1,\,X^2,\,X^3,\,X^4,Y$ are 
Poisson distributed with  $\E {X^2} = \E {X^3} = \E{X^4} = 1.0$, $\E{Y} = 25$ and $D_{\!{MAX}} = E_{\!{MAX}} = 30$}
 \label{plot:4n-sa-levels}
 \end{figure}
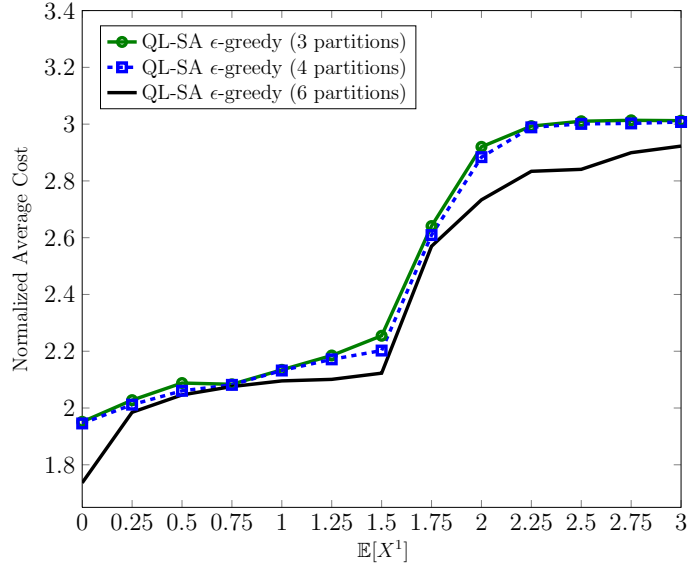

The single-stage cost function as defined in \eqref{eqn:cost-fn}, includes the effect of action in the conversion function $g(\cdot)$.
The effect of the action taken can be explicitly included in the single-stage cost function of the following form:
\begin{equation}
\label{eqn:cost-function-with-r2}
c(s_k,T(s_k)) = {\sum_{i = 1}^{n}} (r_1*(q^i_{k} - g(T^i(s_{\!k})))^{+} + r_2*T^i(s_{\!k}) ),
\end{equation}
where $r_1$, $r_2$ are the tradeoff parameters, $r_1+r_2 = 1$ and $r_1,r_2 \geq 0$. The above equation is a convex combination
of the sum of data queue lengths and the collective energy supplied to the nodes. It can be observed that the single-stage 
cost function \eqref{eqn:cost-fn} used in our MDP model can be derived from \eqref{eqn:cost-function-with-r2} by taking
$r_1 = 1$ and $r_2 = 0$. When $r_1 > 0$ and $r_2 > 0$, the cost structure \eqref{eqn:cost-function-with-r2}
gives importance to the data queue length as well as the amount of energy supplied. The performance comparison of 
our algorithms (described in Section \ref{sec:2nodes-algo}) with the greedy and combined nodes Q-learning methods 
using this single-stage cost function is shown in Fig. \ref{plot:2n-r2-0.3}. For the simulations,
buffer sizes are fixed at $14$ and $X^1, \, X^2, \, Y$ are distributed according to the Poisson distribution
with $\E Y = 13 \text{ and } \E {X^2} = 1.0$.

In Fig. \ref{plot:2n-r2-0.3}, the x-axis indicates the change in data rate of Node 1. 
This setup is akin to that used in Fig. \ref{plot:2n-demax14}. The y-axis indicates the normalized average queue length
of all the nodes. We considered values $r_1 = 0.7$ and $r_2 = 0.3$.
The plot indicates only the average queue length of all nodes, since our objective is to minimize
the average delay of transmission of data (which is related to the data queue length).
From Fig. \ref{plot:2n-r2-0.3}, it can be observed that all learning algorithms show an increase in the 
collective average queue length (referred to as the normalized average cost in Figs. \ref{plot:2n-demax14}-\ref{plot:4n-sa-levels}).
This occurs because by using the cost function \eqref{eqn:cost-function-with-r2} the learning algorithms 
(Q-learning with UCB and $\epsilon$-greedy exploration as well as combined nodes Q-learning) give less importance
to the queue length component in the cost function. Thus the policies learnt by these algorithms minimize the energy usage albeit with an increase
in data queue length. As the figure shows, the learning algorithms we described in Section \ref{sec:2nodes-algo} perform much better
compared to the greedy and combined nodes methods.
\begin{figure} 
    \begin{subfigure}[t]{0.5\columnwidth}
      \begin{center}
       \scalebox{0.49}{
%
%
%
\definecolor{mycolor1}{rgb}{0.04314,0.51765,0.78039}%
\definecolor{mycolor2}{rgb}{0.00000,0.49804,0.00000}%
\begin{tikzpicture}

\begin{axis}[%
width=4.7715625in,
height=4.3076875in,
scale only axis,
xmin=0,
xmax=2.5,
xtick={0,0.25,0.5,0.75,1.0,1.25,1.5,1.75,2.0,2.25,2.5},
xlabel={$\E {X^1}$},
ymin=0.4,
ymax=2.6,
ylabel={Normalized Average Queue Length of all nodes},
legend style={draw=black,fill=white,legend cell align=left}
]

\addplot [color=red,solid,line width=2.0pt,mark=diamond,mark options={solid},mark size=3pt] table[row sep=crcr]
{
0.0     0.460701\\
0.25	0.50831\\
0.5	0.58602\\
0.75	0.7306\\
1	0.99164\\
1.25	1.32357\\
1.5	1.54487\\
1.75	1.6466\\
2	1.6944\\
2.25	1.71966\\
2.5	1.73438\\
};
\addlegendentry{Greedy};

\addplot [color=mycolor2,solid,line width=2.0pt] table[row sep=crcr]
{
0.0     0.50999
0.25	0.598618\\
0.5	0.801197\\
0.75	1.0036\\
1	1.56\\
1.25	1.78\\
1.5	1.9998\\
1.75	2.0\\
2	2.0\\
2.25	2.0\\
2.5	2.0\\
};
\addlegendentry{Combined Nodes QL};

\addplot [color=blue,solid,line width=2.0pt,mark=o,mark options={solid},mark size=3pt] table[row sep=crcr]
{
0.0	0.46013\\
0.25	0.49675\\
0.5	0.50139\\
0.75	0.540141\\
1	0.631084\\
1.25	0.702871\\
1.5	0.896241\\
1.75	1.09465\\
2	1.352\\
2.25	1.40175\\
2.5	1.43031\\
};
\addlegendentry{QL $\epsilon$-greedy};

\addplot [color=black,dash pattern=on 1pt off 3pt on 3pt off 3pt,line width=2.0pt]
  table[row sep=crcr]{
0.0     0.46003\\
0.25	0.47241\\
0.5	0.48349\\
0.75	0.50165\\
1	0.55192\\
1.25	0.59674\\
1.5	0.74562\\
1.75	0.98326\\
2	1.23519\\
2.25	1.35624\\
2.5	1.37819\\
};
\addlegendentry{Q-UCB};

\end{axis}
\end{tikzpicture}
	\caption{$r_1 = 0.7, \,r_2 = 0.3$}
	\label{plot:2n-r2-0.3}
	\end{center}
    \end{subfigure}%
    \begin{subfigure}[t]{0.5\columnwidth} 
	\begin{center}
        \scalebox{0.49}{
%
%
%
\definecolor{mycolor1}{rgb}{0.04314,0.51765,0.78039}%
\definecolor{mycolor2}{rgb}{0.00000,0.49804,0.00000}%
\definecolor{mycolor3}{rgb}{0.8,0.3,0.1}%
\begin{tikzpicture}

\begin{axis}[%
width=4.7715625in,
height=4.3076875in,
scale only axis,
xmin=0,
xmax=2.5,
xtick={0,0.25,0.5,0.75,1.0,1.25,1.5,1.75,2.0,2.25,2.5},
xlabel={$\E {X^1}$},
ymin=0.4,
ymax=2.3,
ylabel={Normalized Average Cost},
legend style={draw=black,fill=white,legend cell align=left}
]

\addplot [color=red,dash pattern=on 1pt off 3pt on 3pt off 3pt,line width=2.0pt] table[row sep=crcr]
{
0.0     0.460701\\
0.25	0.50831\\
0.5	0.58602\\
0.75	0.7306\\
1	0.99164\\
1.25	1.32357\\
1.5	1.54487\\
1.75	1.6466\\
2	1.6944\\
2.25	1.71966\\
2.5	1.73438\\
};
\addlegendentry{Greedy};

\addplot [color=mycolor2,solid,line width=2.0pt,mark=+,mark options={solid},mark size=4pt] table[row sep=crcr]
{
0.0     0.454837
0.25	0.487531\\
0.5	0.56333\\
0.75	0.643118\\
1	0.88301\\
1.25	0.885\\
1.5	1.17746\\
1.75	1.41471\\
2	1.53603\\
2.25	1.59425\\
2.5	1.62519\\
};
\addlegendentry{Combined Nodes QL};

\addplot [color=blue,solid,line width=2.0pt,mark=o,mark options={solid},mark size=3pt] table[row sep=crcr]
{
0.0	0.452877\\
0.25	0.4684\\
0.5	0.487809\\
0.75	0.5091\\
1	0.52\\
1.25	0.601472\\
1.5	0.817782\\
1.75	1.08151\\
2	1.28094\\
2.25	1.34175\\
2.5	1.35916\\
};
\addlegendentry{QL $\epsilon$-greedy};

\addplot [color=mycolor3,solid,line width=2.0pt] table[row sep=crcr]
{
0.0	0.454558\\
0.25	0.469\\
0.5	0.5392\\
0.75	0.57528\\
1	0.773157\\
1.25	0.83985\\
1.5	1.0728\\
1.75	1.18705\\
2	1.36782\\
2.25	1.39235\\
2.5	1.53643\\
};
\addlegendentry{QL-SA $\epsilon$-greedy};

%
%

\end{axis}
\end{tikzpicture}
        \caption{$r_1 = 1$}
        \label{plot:2n-ql-qlsa-gr-cnql}
	\end{center}
    \end{subfigure}
    \caption{Performance comparison of policies: $E_{\!{MAX}} = D_{\!{MAX}} = 14$ when 
	      $X^1,\,X^2,\,Y$ are Poisson distributed with $\E Y = 13, \E {X^2} = 1.0$ }   
\end{figure}

In Fig. \ref{plot:2n-ql-qlsa-gr-cnql}, the performance comparison of Q-learning with and without state aggregation is 
shown for the case of two nodes and an EH source (i.i.d case) and compared with greedy and combined nodes Q-learning
mathod. The $\epsilon$-greedy exploration mechanism is 
used for both algorithms.
The experimental setup is similar to that used in Fig. \ref{plot:2n-demax14}. 
The x-axis indicates the variation in data rate of Node 1, while the y-axis indicates the normalized 
average cost of the nodes. The algorithm in Section \ref{sec:approx-alg} was simulated by 
partitioning the data and energy buffers into 3 partitions each.
It can be observed in Fig. \ref{plot:2n-ql-qlsa-gr-cnql} that Q-learning with state aggregation performs better than the greedy and combined nodes methods. However since Q-learning with state aggregation algorithm finds 
near-optimal policy, its performance is not as good as the algorithm in Section \ref{sec:ql} with the same exploration mechanism.
\begin{remark}
 The Greedy algorithm distributes the available energy among the sensor nodes based on the proportion
of data available in the nodes. It shares all the available energy at every decision instant without storing it for future use.
We compare our algorithms with the Greedy algorithm in order to show that myopic strategy may not be optimal. 
Our results show that one has to devise the policy not only for the present requirement for energy but also for the 
future energy requirements as well. This idea is naturally incorporated in our RL algorithms. 
Moreover, Greedy policy is optimal when the conversion function $g$ is linear. This has been derived in \cite{sharma2010optimal}
for the case of single sensor. The performance of the algorithm proposed in \cite{prabuchandran2013q} with non-linear $g$ is compared
with the performance of the greedy method. Thus, the comparison of the performance of our algorithms with the greedy method
also follows naturally from the earlier cited works.

The Combined Nodes Q-learning method learns the policy which maps the total number of data bits available 
in all the nodes to the total amount of energy required. The energy sharing between the nodes is then based on the 
proportion of data available in the nodes. Under the Combined Nodes Q-learning algorithm, the state space is greatly reduced, 
i.e., instead of the cartesian product of states in each node (as in our Q-learning method with and without state aggregation), 
it is just the sum of the states of the sensor nodes. So, the learning is faster in combined nodes Q-learning algorithm. 
However, the policy learnt is suboptimal as was shown in Figs. \ref{plot:2n-demax14}-\ref{plot:2n-ql-qlsa-gr-cnql} and performs poorly in 
comparison with our algorithms. So, we compare our algorithms with Combined nodes Q-learning to illustrate the tradeoff 
of size of the state space with the nature of the obtained policy.

Note that our RL algorithms learn the energy sharing policy not quantized to a single point but considers energy sharing 
among the sensor nodes. Learning an optimal energy sharing scheme is a difficult problem. Hence, we would like to understand 
how well our algorithms perform against a simple heuristic policy such as Greedy or a policy obtained from the Combined nodes Q-learning method.
\end{remark}

\begin{remark}
\label{remark:different-g}
The function $g(.)$ gives the number of bits that can be transmitted using certain units of energy. 
Our algorithms work regardless of the forms of $g$.
RL algorithms use the simulation samples to learn the energy sharing policy by trying out various actions in each of the states. In our problem, at time $k$ let us assume we are in state $s_k = (q_{k}^1,q_{k}^2,\ldots,q_{k}^n,E_k,X_{k-1},Y_{k-1})$, i.e., the data in the data buffer and energy in the energy buffer are fixed to some values. Based on the current Q-value, we share the energy available to the various sensor nodes by selecting action $T_k=(T_k^1,T_k^2,\ldots, T_k^n)$. Depending on the
action $T_k$, the state of the system evolves according to \eqref{eqn:qk-evolution}-\eqref{eqn:ek-evolution}.

In order to find the next state of the system (\eqref{eqn:qk-evolution} - \eqref{eqn:ek-evolution}), 
it suffices to know the number of bits that got transmitted by chosing the action $T_k$ in 
slot $k$ in a real system, which is given by $g(T_k)$. It must be noted that we do not need information on 
the functional form of $g$ for finding the next state, but only the value of the function for action $T_k$.
This value can be observed (in a real system) even if we do not have the precise
model for the Gaussian function in terms of $g(\cdot)$. In other words, all we
need is to observe the number of bits that got transmitted by supplying $T_k$ units of energy.

To update the Q-value of state-action pair $(s_k,T_k)$ (see \eqref{eqn:q-learning-update}), 
we need to know the cost $c(s_k,T_k)$ incurred by choosing action $T_k$ in state $s_k$, which is computed
using \eqref{eqn:cost-fn}, where again we only require information on $g(T^i_k), ~i=1,2, \ldots n$, 
but not the exact form of $g(\cdot)$.
Our proposed RL algorithms work by updating Q values, and such an updation essentially requires the 
cost information (computed using \eqref{eqn:cost-fn}).
Similarly in the cross entropy method, to compute the average cost of the policy, we need to 
compute the single-stage cost (using \eqref{eqn:cost-fn}). In summary, our algorithms do 
not require the exact form of $g(\cdot)$.

In the case of the greedy algorithm, in order to decide the number of energy units $T_k$ that need to be shared, the function $g^{-1}(\cdot)$ and hence the functional form of $g(\cdot)$ must be known (see Section \ref{sec:results}), 
i.e., one needs to obtain the mathematical model for the conversion function. In comparison, as 
stated before, our algorithms do not need such information.

However, to simulate the environment, we need to know the functional form of the conversion function $g$. But, 
in a real physical system, our algorihms do not require the functional form of $g$. Figure \ref{plot:2n-different-g}
illustrates the performance of our algorithms and the Greedy and Combined nodes Q-learning methods
for a different form of function $g(\cdot)$, i.e., $g(\cdot)= \sqrt{3\log(1+x)}$. 
The setup is similar to that of Fig. \ref{plot:2n-non-iid}. 
We observe from Fig. \ref{plot:2n-different-g} that irrespective of the form of $g(\cdot)$, 
our algorithms find good policies, since they do not require this knowledge to do so.
\end{remark}
\begin{figure}[h]
 \begin{center}
   \scalebox{0.7}{
%
%
%
\definecolor{mycolor1}{rgb}{0.04314,0.51765,0.78039}%
\definecolor{mycolor2}{rgb}{0.00000,0.49804,0.00000}%
\begin{tikzpicture}

\begin{axis}[%
width=4.7715625in,
height=4.3076875in,
scale only axis,
xmin=0,
xmax=10,
xtick={0,1,2,3,4,5,6,7,8,9,10},
xlabel={$\E{\omega_1}$},
ymin=0.000000115,
ymax=1.7,
ylabel={Normalized Average Cost},
legend style={draw=black,fill=white,legend cell align=left}
]

\addplot [color=red,solid,line width=2.0pt,mark=diamond,mark options={solid},mark size=3pt] table[row sep=crcr]
{
0        0.00211652  	\\
1	 0.017277  	\\
2	 0.336894	\\
3	 0.990259 	\\
4	 1.17565  	\\
5	 1.22646   	\\
6	 1.24401   	\\
7	 1.25081   	\\
8	 1.25337   	\\
9	 1.25455    	\\
10	 1.25497  	\\
};
\addlegendentry{Greedy};

\addplot [color=mycolor2,solid,line width=2.0pt] table[row sep=crcr]
{
0  	 0.00213488    	\\
1	 0.0173912   	\\
2	 0.35141   	\\
3	 1.01313    	\\
4	 1.1692    	\\
5	 1.21697     	\\
6	 1.23108     	\\
7	 1.23505     	\\
8	 1.23984     	\\
9	 1.25305     	\\
10	 1.26717      	\\
};
\addlegendentry{Combined Nodes QL};
\addplot [color=blue,solid,line width=2.0pt,mark=o,mark options={solid},mark size=3pt] table[row sep=crcr]
{
0 	0.001923      	\\
1	0.0171		\\
2	0.301456	\\
3	0.956789	\\
4	1.14123		\\
5	1.16145		\\
6	1.19666		\\
7	1.20342		\\
8	1.21116		\\
9	1.22598 	\\
10	1.23567		\\
};
\addlegendentry{QL $\epsilon$-greedy};
\addplot [color=black,dash pattern=on 1pt off 3pt on 3pt off 3pt,line width=2.0pt]
  table[row sep=crcr]{
0 	0.0019		\\
1	0.0161		\\
2	0.23		\\
3	0.85		\\
4	1.02		\\
5	1.12		\\
6	1.134		\\
7	1.14		\\
8	1.154		\\
9	1.167 		\\
10	1.2		\\
};
\addlegendentry{Q-UCB};

\end{axis}
\end{tikzpicture}
    \caption{$E_{\!{MAX}} = 20,\,D_{\!{MAX}} = 10$, $\omega_1,\,\omega_2,\,\chi$ are Poisson distributed 
		 with $\E {\omega_2} = 1.0$,$\E {\chi} = 20$, $g(x) = \sqrt{3\log(1+x)}$ } 
     \label{plot:2n-different-g}
 \end{center}
\end{figure}
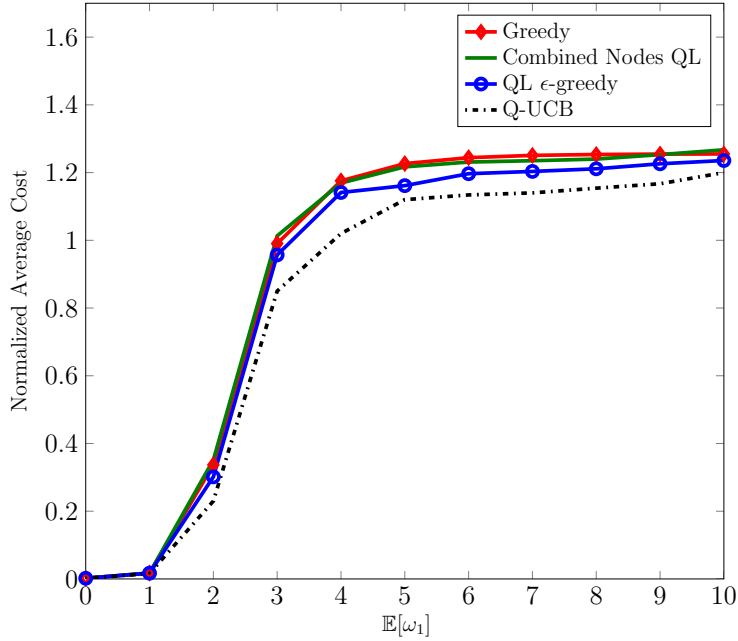
\section{Conclusions and Future Work}
\label{sec:conclusions}
We studied the problem of energy sharing in sensor networks and proposed a new technique to manage energy available through harvesting.
Multiple nodes in the network sense random amounts of data and share the energy harvested by an
energy harvesting source.
We presented an MDP model for this problem and an algorithm that determines the optimal amount of energy to be 
supplied to every node at a decision instant. The algorithm minimizes the sum of (data) queue lengths in the data buffers, by finding
the optimal energy split profile. In order to deal with the curse of dimensionality, we also proposed approximation algorithms
that employ state aggregation effectively to reduce the computational complexity.
Numerical experiments  showed that our algorithms outperform the algorithms described in Section \ref{sec:results}.

Our future work would involve applying threshold tuning for state aggregation, gradient based approaches and basis adaptation methods 
for policy approximation. 
The partitions formed for clustering the state space (Section \ref{sec:t-features}) can be improved
by tuning the partition thresholds (see \cite{6247516}). This method can be employed to obtain improved
deterministic policies when state-action space is extremely large.
Gradient based methods \cite{bhatnagar2013stochastic}, \cite{konda2003onactor}, \cite{bhatnagar2009natural} approximate 
the policy using parameter $\theta$ and a set of given (fixed) basis functions $\{f_k : 1 \leq k \leq n\}$. 
Typically a probability distribution over the actions corresponding to a state is defined using $\theta$ and $\{f_k\}$. 
The parameter is updated using the gradient direction of the policy performance, which is usually the long-run average or discounted cost of the policy.
In the approximation algorithm described in Section \ref{sec:cross-entropy}, the basis functions used 
in the policy parameterization are fixed. One could obtain better policies if the basis functions are also optimized.
Basis adaptation methods \cite{menache2005basis}, \cite{bhatnagar2013feature} start with a given set of basis functions. 
The random policy parameter $\theta$ is updated using simulated trajectories of the MDP on a faster timescale. 
The basis functions are tuned on a slower timescale. These methods can be employed to find better policies.
We shall also develop prototype implementations for this model and test our algorithms.

\section*{Acknowledgements}
The authors would like to thank all the three reviewers of \cite{manuscript} for their detailed comments that significantly
helped in improving the quality of this report and the manuscript \cite{manuscript}. This work was supported in part through projects from the Defence Research and Development Organisation (DRDO) and the Department of Science and Technology (DST),
Government of India.
\bibliographystyle{plain}
\bibliography{arxiv_version}

\end{document}